\documentclass[12pt]{article}
\usepackage[active]{srcltx}
\usepackage{amssymb,latexsym,amsmath,enumerate,verbatim,amsfonts,amsthm}
\usepackage{cite}
\usepackage[ansinew]{inputenc}
\usepackage{graphicx}
\usepackage{color}
\usepackage{url}
\usepackage[colorlinks]{hyperref}
\numberwithin{equation}{section}
\newtheorem{theorem}{Theorem}[section]

\newtheorem{lemma}{Lemma}[section]

\newtheorem{remark}{Remark}[section]
\begin{document}
\title{Copositivity for 3rd order symmetric tensors and applications}%
	
\author{Jiarui Liu, Yisheng Song\thanks{School of Mathematics and Information Science  and Henan Engineering Laboratory for Big Data Statistical Analysis and Optimal Control, Henan Normal University, XinXiang HeNan,  P.R. China, 453007.
			Email: songyisheng@htu.cn. This author's work was supported by
			the National Natural Science Foundation of P.R. China (Grant No.
			11571095, 11601134).}}
	\date{\today}
	
	\maketitle
	%
	\vskip 4mm
\begin{quote}{\bf Abstract.}\ The strict copositivity of 4th order symmetric tensor may apply to detect vacuum stability of general scalar potential. For finding analytical expressions of (strict) copositivity of 4th order symmetric tensor, we may reduce its order to 3rd order to better deal with it.  So,  it is provided that several analytically sufficient conditions for the copositivity of 3rd order 2 dimensional (3 dimensional) symmetric tensors. Subsequently, applying these conclusions to 4th order  tensors, the analytically sufficient conditions of copositivity are proved for 4th order 2 dimensional and 3 dimensional symmetric tensors.  Finally, we apply these results to present analytical vacuum stability conditions for vacuum stability for $\mathbb{Z}_3$ scalar dark matter.
	\vskip 2mm
		
	{\bf Key Words and Phrases:} Copositive Tensors, Symmetric tensor, Homogeneous polynomial, Analytical expression.\vskip 2mm
		
	{\bf 2010 AMS Subject Classification:} 15A18, 15A69, 90C20, 90C30
\end{quote}
	\vskip 8mm

	\section{\bf Introduction}\label{}
	
	Recently, Kannike \cite{K2016,K2018,K2012} studied the vacuum stability of general scalar potentials
	of a few fields.  The most general scalar quartic
potential of the {\bf SM} Higgs $\mathbf{H}_1$, an inert doublet $\mathbf{H}_2$ and a
complex singlet $\mathbf{S}$ is
\begin{align}
V(h_1,h_2,s)=\mathbf{\Gamma}\phi^4=&\sum_{i,j,k,l=1}^3 \gamma_{ijkl}z_iz_jz_kz_l,\label{eq:11}
\end{align}
where $z=(z_1,z_2,z_3)^\top=(h_1,h_2,s)^\top$, $
h_1=|H_1|,\ h_2=|H_2|, H_2^{\dagger} H_1=h_1h_2\rho e^{i\phi}, S=se^{i\phi_S},$
 $\mathbf{\Gamma}=(\gamma_{ijkl})$ is coupling tensor, an 4th order 3 dimensional real symmetric tensor.  Clearly, $h_1\geq0, h_2\geq0, s\geq0.$   So, the vacuum stability of $\mathbb{Z}_3$ scalar dark matter $V(h_1,h_2,s)$ is equivalent to the strict copositivity of the coupling tensor $\mathbf{\Gamma}=(\gamma_{ijkl})$.  The concept of (strict) copositivity of symmetric tensors was  introduced by Qi \cite{LQ5} in 2013.  A symmetric tensor $\mathbf{\Gamma}$ with order $m$ and dimension $n$ is called \begin{itemize}
		\item[(i)] {\em copositive } if $\mathbf{\Gamma}x^m=\sum\limits_{i_1,i_2,\cdots,i_m=1}^n\gamma_{i_1i_2\cdots
		i_m}x_{i_1}x_{i_2}\cdots
	x_{i_m}\geq0$ for all nonnegative vector $x=(x_1,\cdots,x_n)^\top$;
\item[(ii)] {\em strictly copositive} if  $\mathbf{\Gamma}x^m>0$ for all nonnegative and nonzero vector $x=(x_1,\cdots,x_n)^\top$.\end{itemize}

Since the practical matters such as the vacuum stability of general scalar potentials of a few fields require precise expressions, it is necessery to find the analytical expressions of the strict copositivity of a symmetric tensor.
 Qi \cite{LQ5} showed a symmetric tensor is strictly copositive if for all $i=1,2,\cdots, n$, $$\gamma_{ii\cdots i}+\sum\{\gamma_{i_1i_2\cdots
		i_m}<0,\ (i_1,i_2,\cdots, i_m)\ne(i,i,\cdots, i)\}>0.$$ Song-Qi \cite{SQ2015} gave a necessary and sufficient sondition of strictly copositive tensors with the help of its H-eigenvalue and Z-eigenvalue. Song-Qi \cite{S-Q2016} studied the (strict) copositvity of symmetric tensors by means of the constrained minimization problem on the unit sphere. Song-Qi \cite{SQ2016} showed that the (strict) copositivity of a symmetric tensor is equivalent to its (strict) semipositiveness. A tensor $\mathbf{\Gamma}$ is said to be (strictly) semipositive (Song-Qi \cite{SQ2017}) if for each nonzero and nonnegative vector $x=(x_1,x_2,\cdots,x_n)^\top$, there is $k\in \{1,2,\cdots,n\}$ such that $$x_k>0\mbox{ and }\left(\mathbf{\Gamma} x^{m-1}\right)_k=\sum_{i_2,\cdots,i_m=1}^n\gamma_{ki_2\cdots i_m}x_{i_2}\cdots x_{i_m}\geq0(>0).$$ This class of tensors has close relationships with  the tensor complementarity problems (for short, TCP)(\cite{SY2016,S-Q2017,SM2018}).  For  more details about TCP, also see \cite{BHW2016,BP2018,CQS2018,CQW2016,CQW2019,DLQ2018,G2017,HQ2017,LQX2017,WCW2018,WHB2016,WHH2018} and references cited therein.

Recently, on the base of the main results of Song-Qi \cite{SQ2016}, an analytical expression of detecting the strict copositivity of a symmetric tensor was  obtained by summarizing conclusions of Qi-Song \cite{QS}, Song-Qi \cite{S-Q2015}, Song-Mei \cite{SM2018}, Yuan-You\cite{YY}. That is,  a symmetric tensor $\mathbf{\Gamma}$ is strictly copositive if it satisfies \begin{itemize}
\item[(i)]  $\sum\limits_{i_{2},\cdots,i_{m}=1}^{n}\gamma_{ii_{2}i_{3}\cdots i_{m}}>0$ for all $i\in \{1,2,\cdots,n\},$
\item[(ii)] $\frac{1}{n^{m-1}}(\sum\limits_{i_{2},\cdots,i_{m}=1}^{n}\gamma_{ii_{2}i_{3}\cdots i_{m}})>\gamma_{ij_{2}j_{3}\cdots j_{m}}$  for all $(j_{2},j_{3},\cdots,j_{m})\neq (i,i,\cdots,i)$.\end{itemize}
The numerical algorithms of copositivity of a tensor also were presented by  Chen-Huang-Qi \cite{CHQ2017}, Chen-Huang-Qi \cite{CHQ2018}, Nie-Yang-Zhang \cite{NYZ2018}, Li-Zhang-Huang-Qi \cite{LZHQ2019}, also see \cite{CW2018, QCC2018, QL2017}.
  Very recently, Song-Qi \cite{SQ2019} provided several analytically sufficient conditions of checking (strict) copositivity of 4th order 3 dimensional symmetric tensors by reducing orders or dimensions  of tensor.

	In this paper, we will study the analytical expressions of certifying copositivity of a 3rd order 3 (or 2) dimensional symmetric tensor. Then applying them to show sufficient conditions of copositivity of 4th order 3 dimensional tensors, which are differ from ones of Song-Qi \cite{SQ2019}. Furthermore, we use these conclusions to check  vacuum stability for $\mathbb{Z}_3$ scalar dark matter.
	
	\section{\bf Preliminaries and Basic facts}

Let $\|\cdot\|$ denote any norm on $\mathbb{R}^n$. Then the equivalent definition of (strict) copositivity and semipositive (positive) definiteness of a symmetric tensor was presented in the sense of any norm on $\mathbb{R}^2$ \cite{LQ1,LQ5,SQ2015, QL2017,QCC2018}.

	\begin{lemma}(\cite{SQ2015}) \label{le:21} Let $\mathbf{\Gamma}$ be a symmetric tensor of order $m$ and dimension $n$. Then
		\begin{itemize}
			\item[(i)] $\mathbf{\Gamma}$ is copositive if and only if  $\mathbf{\Gamma}x^m\geq0$ for all nonnegative vectors $x\in \mathbb{R}^n$ with $\|x\|=1$;
			\item[(ii)] $\mathbf{\Gamma}$ is strictly copositive if and only if $\mathbf{\Gamma}x^m>0$ for all nonnegative vectors $x\in \mathbb{R}^n$ with $\|x\|=1$.
		\end{itemize}
	\end{lemma}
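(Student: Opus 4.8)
The plan is to reduce the copositivity condition, which quantifies over the unbounded cone of nonnegative vectors, to the same condition on the slice $\{x \in \mathbb{R}^n : x \geq 0,\ \|x\| = 1\}$, using nothing more than the positive homogeneity of degree $m$ of the form $x \mapsto \mathbf{\Gamma}x^m$. Concretely, from the multilinearity of $\mathbf{\Gamma}$ in its $m$ arguments one has $\mathbf{\Gamma}(tx)^m = t^m\,\mathbf{\Gamma}x^m$ for every $t \geq 0$ and every $x \in \mathbb{R}^n$, and scaling a nonnegative vector by a nonnegative scalar again yields a nonnegative vector; these two observations are the whole engine of the proof.

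For part (i), the ``only if'' direction is immediate, since the set of nonnegative unit vectors is contained in the set of all nonnegative vectors. For the ``if'' direction, I would take an arbitrary nonnegative $x \in \mathbb{R}^n$ and split into two cases. If $x = 0$, then $\mathbf{\Gamma}x^m = 0 \geq 0$ trivially. If $x \neq 0$, put $t = \|x\| > 0$ and $y = x/t$, so that $y \geq 0$ and $\|y\| = 1$; the hypothesis gives $\mathbf{\Gamma}y^m \geq 0$, and homogeneity gives $\mathbf{\Gamma}x^m = t^m\,\mathbf{\Gamma}y^m \geq 0$. Hence $\mathbf{\Gamma}$ is copositive.

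Part (ii) follows by the same scaling argument with the inequalities made strict: the ``only if'' direction is again trivial, and for the converse one notes that a nonzero nonnegative $x$ has $t = \|x\| > 0$, hence $t^m > 0$, so $\mathbf{\Gamma}y^m > 0$ (valid since $y$ is nonzero, nonnegative, and $\|y\| = 1$) forces $\mathbf{\Gamma}x^m = t^m\,\mathbf{\Gamma}y^m > 0$.

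I do not expect any genuine obstacle; the only points deserving care are treating the zero vector separately in part (i) and observing that the homogeneity factor $t^m$ is strictly positive exactly because $t = \|x\|$ is strictly positive (this is where the sign would otherwise be at risk when the order $m$ is odd). Finally, since neither the statement nor the argument privileges a particular norm, the equivalence holds for every norm on $\mathbb{R}^n$; this is precisely what later permits replacing the cone of nonnegative vectors by a compact constraint set when searching for analytical copositivity conditions.
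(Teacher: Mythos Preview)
Your argument is correct and is the standard homogeneity-and-normalization proof of this equivalence. Note, however, that the paper does not actually give its own proof of this lemma: it is quoted from \cite{SQ2015} and stated without proof, so there is nothing to compare your approach against in the present paper. Your write-up would serve perfectly well as the omitted proof.
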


For a cubic and univariate polynomial $P(t)$ with real coefficients,
\begin{equation}\label{eq:25}P(t)=at^3+bt^2+ct+d,\end{equation}
Schmidt-He$\beta$ \cite{SH1988} proved its nonnegative conditions.

\begin{lemma}(\cite[Proposition 2]{SH1988}) \label{le:22} Let $P(t)$ be a cubic and univariate polynomial given by \eqref{eq:25}. Then (i) $P(t)\geq0$ for all $t\geq 0$ if and only if  the inequalities systems (1) or (2) hold, \begin{itemize}
	\item[(1)]\  $ a\geq0,\  b\geq0,\ c\geq0,\ d\geq0,$
	\item[(2)]\ $\max\{a,d\}>0$,\ $ a\geq0,\  d\geq0,\ 4ac^3+4b^3d+27a^2d^2-18abcd-b^2c^2\geq0.$
\end{itemize}
(ii) $P(t)\geq0$  for all $t\geq 0$ if the inequalities
$$a\geq0,\  d\geq0,\ b\geq a-2\sqrt{ad}, c\geq d-2\sqrt{ad}$$
hold simultaneously.
\end{lemma}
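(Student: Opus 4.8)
\medskip
\noindent\textbf{Proof proposal.}
The plan is to treat parts (i) and (ii) separately, the key observation for (i) being that the quantity
\[
\delta:=4ac^{3}+4b^{3}d+27a^{2}d^{2}-18abcd-b^{2}c^{2}
\]
is exactly $-\Delta$, where $\Delta=18abcd-4b^{3}d+b^{2}c^{2}-4ac^{3}-27a^{2}d^{2}$ is the discriminant of the cubic $P$. Hence $\delta\ge0$ is equivalent to $\Delta\le0$, and the classical facts about the discriminant then say that $P$ has either one real root together with a conjugate pair of non-real roots (when $\delta>0$), or a repeated real root with all roots real (when $\delta=0$); the degenerate case $a=0$ is to be read as the corresponding statement for the quadratic $bt^{2}+ct+d$, for which $\delta=b^{2}(4bd-c^{2})$ is, up to the nonnegative factor $b^{2}$, the negative of the quadratic discriminant.

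For the ``only if'' direction of (i), suppose $P(t)\ge0$ for all $t\ge0$. Then $d=P(0)\ge0$, and $a\ge0$, since otherwise $P(t)\to-\infty$ as $t\to+\infty$. If moreover $b\ge0$ and $c\ge0$, we are in case (1), so assume $b<0$ or $c<0$. Were $\delta<0$, then $\Delta>0$ and $P$ would have three distinct real roots $\rho_{1}<\rho_{2}<\rho_{3}$ with $P<0$ on $(\rho_{2},\rho_{3})$; nonnegativity on $[0,\infty)$ forces $\rho_{3}\le0$, hence $\rho_{1},\rho_{2},\rho_{3}\le0$, and Vieta's formulas give $b=-a(\rho_{1}+\rho_{2}+\rho_{3})\ge0$, $c=a(\rho_{1}\rho_{2}+\rho_{1}\rho_{3}+\rho_{2}\rho_{3})\ge0$ and $d=-a\rho_{1}\rho_{2}\rho_{3}\ge0$, a contradiction; so $\delta\ge0$. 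Likewise, if $a=d=0$ then $P(t)=t(bt+c)$, and $P\ge0$ on $[0,\infty)$ forces $b\ge0$ and $c\ge0$, again a contradiction; hence $\max\{a,d\}>0$, and (2) holds. For the ``if'' direction, under (1) every term of $P(t)$ is nonnegative for $t\ge0$; under (2) one uses the root description above: when $\delta>0$ write $P(t)=a(t-\rho)(t^{2}+pt+q)$ with $t^{2}+pt+q>0$ on $\mathbb{R}$, so $P(0)=d\ge0$ forces the simple real root $\rho\le0$ and therefore $P\ge0$ on $[0,\infty)$; when $\delta=0$ one argues analogously after splitting off the repeated linear factor. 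I expect this last step --- the analysis of the repeated-root configurations ($\delta=0$), together with the low-degree degenerate cases in which the leading (or the two leading) coefficients vanish --- to be the main obstacle: in each of these one must combine the sign information $a\ge0$, $d\ge0$ with the position of the remaining real root(s) relative to the origin, and carrying this out cleanly and uniformly across the subcases requires the most care.

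For part (ii) I would argue directly, bypassing part (i). Since $a\ge0$, $d\ge0$, and enlarging $b$ or $c$ can only increase $P(t)$ for $t\ge0$, the hypotheses $b\ge a-2\sqrt{ad}$ and $c\ge d-2\sqrt{ad}$ give, for all $t\ge0$,
\[
P(t)\ \ge\ at^{3}+(a-2\sqrt{ad})\,t^{2}+(d-2\sqrt{ad})\,t+d .
\]
Setting $u=\sqrt{a}$ and $v=\sqrt{d}$, the right-hand side is $u^{2}t^{3}+(u^{2}-2uv)t^{2}+(v^{2}-2uv)t+v^{2}$, and a direct expansion shows that it equals
\[
(t+1)\,(ut-v)^{2}=(t+1)\bigl(\sqrt{a}\,t-\sqrt{d}\bigr)^{2},
\]
which is nonnegative for every $t\ge-1$, in particular for $t\ge0$. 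Hence $P(t)\ge0$ for all $t\ge0$. The entire content of (ii) is this factorization, and once it is in hand there is no remaining difficulty.
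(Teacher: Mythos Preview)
The paper does not supply its own proof of this lemma: Lemma~\ref{le:22} is simply quoted from Schmidt--He{\ss}~\cite{SH1988} and used as a black box, so there is no ``paper's proof'' to compare against. What you have written is therefore an independent proof of the cited proposition rather than a reconstruction of anything in the present manuscript.

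On the substance: your argument is essentially sound. Part~(ii) via the identity
\[
at^{3}+(a-2\sqrt{ad})t^{2}+(d-2\sqrt{ad})t+d=(t+1)\bigl(\sqrt{a}\,t-\sqrt{d}\bigr)^{2}
\]
is clean and complete. For part~(i), your identification $\delta=-\Delta$ and the Vieta argument for the ``only if'' direction (assuming $a>0$ and $\Delta>0$) are correct; the ``if'' direction for $\delta>0$, $a>0$ is also fine since the irreducible quadratic factor has positive constant term, forcing the lone real root to be $\le0$. You are right that the residual work lies in the borderline cases $\delta=0$ (repeated real root) and the degenerate cases $a=0$ or $a=b=0$; these are not conceptually hard, but one must, for instance, check separately that when $a=0$ and $b>0$ the condition $\delta=b^{2}(4bd-c^{2})\ge0$ together with $d\ge0$ does force the quadratic $bt^{2}+ct+d$ to be nonnegative on $[0,\infty)$, and that when $a=0$, $b<0$ nonnegativity already fails at infinity so case~(2) with $a=0$ tacitly requires $b\ge0$. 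None of this is deep, but it is exactly the kind of bookkeeping you flagged as the main obstacle, and it is not yet carried out in your sketch.
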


\begin{lemma}(\cite[Proposition 3]{SH1988}) \label{le:23} Let a quadratic  and univariate  polynomial $p(t)$ be given by
\begin{equation}\label{eq:23}p(t)=\alpha t^2+\beta t+\gamma.\end{equation} Then $p(t)\geq0$  for all $t\geq0$ if and only if the inequalities
\begin{equation}\label{eq:24}\alpha\geq0,\ \gamma\geq0,\ \beta+2\sqrt{\alpha\gamma}\geq0\end{equation}
hold simultaneously.
\end{lemma}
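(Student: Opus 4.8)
The statement is a self-contained elementary fact about a quadratic restricted to $[0,\infty)$; I would prove the two implications in \eqref{eq:24} separately, with the arithmetic--geometric mean (AM--GM) inequality carrying the argument.

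\emph{Sufficiency.} Suppose $\alpha\ge0$, $\gamma\ge0$ and $\beta+2\sqrt{\alpha\gamma}\ge0$. For any $t\ge0$, applying AM--GM to the two nonnegative reals $\alpha t^{2}$ and $\gamma$ gives $\alpha t^{2}+\gamma\ge 2\sqrt{\alpha\gamma}\,t$, and therefore
\[
p(t)=\alpha t^{2}+\beta t+\gamma\ \ge\ \bigl(\beta+2\sqrt{\alpha\gamma}\bigr)\,t\ \ge\ 0 .
\]
No case distinction is needed for this direction.

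\emph{Necessity.} Suppose $p(t)\ge0$ for all $t\ge0$. Evaluating at $t=0$ gives $\gamma=p(0)\ge0$, and dividing by $t^{2}$ and letting $t\to+\infty$ gives $\alpha\ge0$. For the remaining inequality I would split according to whether the tightness point $t_{0}=\sqrt{\gamma/\alpha}$ of the AM--GM step above is strictly positive. If $\alpha>0$ and $\gamma>0$, then $t_{0}>0$ and $\alpha t_{0}^{2}+\gamma=2\sqrt{\alpha\gamma}\,t_{0}$, so $0\le p(t_{0})=(\beta+2\sqrt{\alpha\gamma})\,t_{0}$ and division by $t_{0}$ gives $\beta+2\sqrt{\alpha\gamma}\ge0$. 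If $\alpha=0$, then $p(t)=\beta t+\gamma$, and letting $t\to+\infty$ forces $\beta\ge0$, while $2\sqrt{\alpha\gamma}=0$; if $\gamma=0$, then $p(t)=t(\alpha t+\beta)\ge0$ for all $t\ge0$ forces $\alpha t+\beta\ge0$ for every $t>0$, hence $\beta\ge0$ on letting $t\downarrow0$, and again $2\sqrt{\alpha\gamma}=0$. In every case $\beta+2\sqrt{\alpha\gamma}\ge0$, which completes the proof.

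There is no genuine obstacle here: the content is elementary. The only point requiring care is the degenerate boundary behaviour --- when $\alpha=0$ or $\gamma=0$ the equality point of AM--GM collapses to $t=0$ and must be handled by a direct argument. It is precisely the square root $2\sqrt{\alpha\gamma}$ appearing in \eqref{eq:24} (rather than a polynomial condition in $\alpha,\beta,\gamma$) that makes the criterion simultaneously clean and sharp, since, granted $\alpha,\gamma\ge0$, the inequality $\beta+2\sqrt{\alpha\gamma}\ge0$ is equivalent to $p$ being nonnegative at its minimiser over $[0,\infty)$.
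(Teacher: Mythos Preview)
Your proof is correct. The paper itself does not supply a proof of this lemma: it is quoted verbatim as Proposition~3 of Schmidt--He{\ss}~\cite{SH1988} and used as a black box, so there is no in-paper argument to compare against.

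Your argument is a clean self-contained treatment. The sufficiency direction via AM--GM is the standard one-line observation, and your necessity direction is carefully organised: you first extract $\alpha\ge0$ and $\gamma\ge0$ from the endpoint and asymptotic behaviour, then in the generic case $\alpha,\gamma>0$ evaluate at the AM--GM equality point $t_0=\sqrt{\gamma/\alpha}$ to recover $\beta+2\sqrt{\alpha\gamma}\ge0$, and finally dispose of the degenerate cases $\alpha=0$ or $\gamma=0$ directly. All steps are sound and the case split is exhaustive. One could alternatively argue the necessity by locating the minimiser of $p$ on $[0,\infty)$ (either at $t=0$ when $\beta\ge0$, or at $t=-\beta/(2\alpha)$ when $\beta<0$ and $\alpha>0$) and computing the minimum value, but your AM--GM-based route is equally efficient and arguably more symmetric with the sufficiency half.
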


\section{\bf Copositivity of 3rd order tensors}

Let $\mathbf{\Gamma}=(\gamma_{ijk})$ be a 3rd order $2$ dimensional symmetric  tensor. Then for a vector $x=(x_1,x_2)^\top$,
\begin{equation}\label{eq:31}\begin{aligned}\mathbf{\Gamma}x^3&=\sum_{i,j,k=1}^2\gamma_{ijk}x_ix_jx_k\\
&=\gamma_{111}x_1^3+3\gamma_{112}x_1^2x_2+3\gamma_{122}x_1x_2^2+\gamma_{222}x_2^3.\end{aligned}\end{equation}

\begin{theorem} \label{th:31} A 3rd order $2$ dimensional symmetric tensor is copositive if and only if the following inequalities systems (1) or (2) hold,
	\begin{itemize}
	\item[(1)]\ $\gamma_{111}\geq0$,\ $\gamma_{222}\geq0$, \ $\gamma_{112}\geq0,\ \gamma_{122}\geq0;$
	\item[(2)]\ $\max\{\gamma_{111}, \gamma_{222}\}>0$,  $\gamma_{111}\geq0$,\ $\gamma_{222}\geq0$,\\ $4\gamma_{111}\gamma_{122}^3+4\gamma_{112}^3\gamma_{222}+\gamma_{111}^2\gamma_{222}^2-6\gamma_{111}\gamma_{112}\gamma_{122}\gamma_{222}-3\gamma_{112}^2\gamma_{122}^2 \geq0.$
		\end{itemize}
\end{theorem}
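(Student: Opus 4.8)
The plan is to reduce the bivariate homogeneous cubic $\mathbf{\Gamma}x^3$ to a univariate cubic polynomial and then invoke Lemma~\ref{le:22}. By Lemma~\ref{le:21}(i), $\mathbf{\Gamma}$ is copositive if and only if $\mathbf{\Gamma}x^3\geq 0$ for all nonnegative $x=(x_1,x_2)^\top$. First I would treat the degenerate case $x_2=0$ (respectively $x_1=0$) separately, which forces $\gamma_{111}\geq 0$ (respectively $\gamma_{222}\geq 0$); these nonnegativity conditions appear in both systems (1) and (2), so they are necessary. For the generic case $x_2>0$, set $t=x_1/x_2\geq 0$ and factor out $x_2^3>0$, so that the sign of $\mathbf{\Gamma}x^3$ equals the sign of
\[
P(t)=\gamma_{111}t^3+3\gamma_{112}t^2+3\gamma_{122}t+\gamma_{222},
\]
which is exactly the cubic in \eqref{eq:25} with $a=\gamma_{111}$, $b=3\gamma_{112}$, $c=3\gamma_{122}$, $d=\gamma_{222}$. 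Thus $\mathbf{\Gamma}$ is copositive if and only if $P(t)\geq 0$ for all $t\geq 0$.

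Next I would apply Lemma~\ref{le:22}(i) verbatim with this substitution. Condition (1) of that lemma, namely $a\geq 0$, $b\geq 0$, $c\geq 0$, $d\geq 0$, becomes $\gamma_{111}\geq 0$, $3\gamma_{112}\geq 0$, $3\gamma_{122}\geq 0$, $\gamma_{222}\geq 0$, which is precisely system (1) of the theorem (the factors of $3$ are irrelevant to the sign). Condition (2) of that lemma, $\max\{a,d\}>0$, $a\geq 0$, $d\geq 0$, $4ac^3+4b^3d+27a^2d^2-18abcd-b^2c^2\geq 0$, becomes $\max\{\gamma_{111},\gamma_{222}\}>0$, $\gamma_{111}\geq 0$, $\gamma_{222}\geq 0$, and
\[
4\gamma_{111}(3\gamma_{122})^3+4(3\gamma_{112})^3\gamma_{222}+27\gamma_{111}^2\gamma_{222}^2-18\gamma_{111}(3\gamma_{112})(3\gamma_{122})\gamma_{222}-(3\gamma_{112})^2(3\gamma_{122})^2\geq 0.
\]
Expanding the powers of $3$ gives $108\gamma_{111}\gamma_{122}^3+108\gamma_{112}^3\gamma_{222}+27\gamma_{111}^2\gamma_{222}^2-162\gamma_{111}\gamma_{112}\gamma_{122}\gamma_{222}-81\gamma_{112}^2\gamma_{122}^2\geq 0$, and dividing through by $27$ yields exactly the discriminant-type inequality in system (2). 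I would then remark that the edge cases $x_2>0$ and $x_2=0$ are consistent: when $x_1=0$ the value is $\gamma_{222}x_2^3$, handled by $d=\gamma_{222}\geq 0$, and the $x_2=0$ case is covered by $a=\gamma_{111}\geq 0$, both of which are already imposed.

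The argument is essentially a change of variables plate followed by bookkeeping, so there is no deep obstacle; the only point requiring care is the arithmetic of clearing the factor $3^{?}$ from each monomial of the quartic invariant $4ac^3+4b^3d+27a^2d^2-18abcd-b^2c^2$ and verifying that the common factor $27$ can be cancelled to land exactly on the stated coefficients $(4,4,1,-6,-3)$. One should double-check that each of the five terms indeed carries a factor of $27$: $ac^3$ contributes $3^3$, $b^3d$ contributes $3^3$, $a^2d^2$ comes with the literal $27$, $abcd$ contributes $3\cdot 3=3^2$ against the coefficient $18=27\cdot 2/3$, and $b^2c^2$ contributes $3^2\cdot 3^2 = 81 = 27\cdot 3$ — so the homogeneous-in-$3$ bookkeeping is slightly non-uniform and is the one place an error could creep in.
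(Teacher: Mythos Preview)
Your proposal is correct and follows essentially the same route as the paper: restrict to nonnegative $x$ via Lemma~\ref{le:21}, dispose of the boundary cases $x_1=0$ or $x_2=0$ to force $\gamma_{111}\geq 0$ and $\gamma_{222}\geq 0$, dehomogenize by $t=x_1/x_2$ to obtain the univariate cubic $P(t)$, and then plug $(a,b,c,d)=(\gamma_{111},3\gamma_{112},3\gamma_{122},\gamma_{222})$ into Lemma~\ref{le:22}(i). Your factor-of-$27$ bookkeeping is exactly the computation the paper performs, and your final coefficients $(4,4,1,-6,-3)$ are correct.
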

	
	\begin{proof}  It follows from Lemma \ref{le:21} that we can restrict $x=(x_1,x_2)^\top$ to $$\|x\|=|x_1|+|x_2|=1
\mbox{ for all } x\mbox{ with }x_1\geq0,\ x_2\geq0.$$
Clearly, $\mathbf{\Gamma}x^3=\gamma_{111}x_1\geq0$ for $x=(x_1,0)$ and $\mathbf{\Gamma}x^3=\gamma_{222}x_2\geq0$ for $x=(0,x_2)$. Now  both $x_1$ and $x_2$ are not $0$.
 Then the homogeneous polynomial $\mathcal{A}x^3$ can be divided by $x_2^3$ to  yield $$\frac{\mathbf{\Gamma}x^3}{x_2^3}=\gamma_{111}\left(\frac{x_1}{x_2}\right)^3+3\gamma_{112}\left(\frac{x_1}{x_2}\right)^2+3\gamma_{122}\left(\frac{x_1}{x_2}\right)+\gamma_{222}.$$
	Let $P(t)=\frac{\mathbf{\Gamma}x^3}{x_2^3}$, i.e.,
\begin{equation}\label{eq:32}P(t)=\gamma_{111}t^3+3\gamma_{112}t^2+3\gamma_{122}t+\gamma_{222}.\end{equation}
	Clearly, $P(t)\geq0$ if and only if $\mathbf{\Gamma}x^3\geq0$. An application of Lemma \ref{le:22}(i) to the polynomial $P(t)$ with $(a,b,c,d)^\top=(\gamma_{111},3\gamma_{112},3\gamma_{122},\gamma_{222})^\top$, we obtain  that $P(t)\geq0$ for all $t\geq0$ if and only if the inequalities systems $$a=\gamma_{111}\geq0, \ d=\gamma_{222}\geq0,\ b=3\gamma_{112}\geq0,\ c=3\gamma_{221}\geq0$$ or
$$\begin{aligned}&\max\{\gamma_{111}, \gamma_{222}\}>0,\ \gamma_{111}\geq0,\ \gamma_{222}\geq0,\\ &4\gamma_{111}(3\gamma_{122})^3+4(3\gamma_{112})^3\gamma_{222}+27\gamma_{111}^2\gamma_{222}^2-18\gamma_{111}(3\gamma_{112})(3\gamma_{122})\gamma_{222}\\
&-(3\gamma_{112})^2(3\gamma_{122})^2 \geq0.
\end{aligned}$$
Namely, \begin{itemize}
	\item[ ]\ \ \ $\gamma_{111}\geq0$,\   $\gamma_{222}\geq0$, \ $\gamma_{112}\geq0,\ \gamma_{122}\geq0;$
	\item[or]\ \ $\max\{\gamma_{111}, \gamma_{222}\}>0$, \ $\gamma_{111}\geq0$,\ $\gamma_{222}\geq0$,\ $$4\gamma_{111}\gamma_{122}^3+4\gamma_{112}^3\gamma_{222}+\gamma_{111}^2\gamma_{222}^2-6\gamma_{111}\gamma_{112}\gamma_{122}\gamma_{222}-3\gamma_{112}^2\gamma_{122}^2 \geq0.$$
\end{itemize}
Therefore, the desired conclusions are obtained.
		\end{proof}

\begin{remark}	
It follows from the proof of Theorem \ref{th:31} that $\mathbf{\Gamma}x^3$ may be divided by $x_1^3$ ($x_1\ne0$), then $$P(t)=\frac{\mathbf{\Gamma}x^3}{x_1^3}=\gamma_{111}+3\gamma_{112}t+3\gamma_{122}t^2+\gamma_{222}t^3,$$ where $t=\frac{x_2}{x_1}$. The conclusions are same.	
\end{remark}
	
\begin{theorem} \label{th:32} Let $\mathbf{\Gamma}$ be a 3rd order 2 dimensional and symmetric tensor. Assume that
		$$\begin{aligned}
		&\gamma_{111}\geq0, \gamma_{222}\geq0,\\
		&\gamma_{112}\geq \frac{\gamma_{111}-2\sqrt{\gamma_{111}\gamma_{222}}}3,\ \gamma_{122}\geq \frac{\gamma_{222}-2\sqrt{\gamma_{111}\gamma_{222}}}3.
\end{aligned}$$
		Then  $\mathbf{\Gamma}$ is copositive.
	\end{theorem}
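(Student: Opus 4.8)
The plan is to repeat the order-reduction used in the proof of Theorem \ref{th:31} and then invoke the \emph{sufficient} criterion of Lemma \ref{le:22}(ii) instead of the necessary-and-sufficient part. By Lemma \ref{le:21} it suffices to show $\mathbf{\Gamma}x^3\geq0$ for every nonnegative $x=(x_1,x_2)^\top$. First I would dispose of the two boundary rays: if $x_2=0$ then $\mathbf{\Gamma}x^3=\gamma_{111}x_1^3\geq0$ since $\gamma_{111}\geq0$, and if $x_1=0$ then $\mathbf{\Gamma}x^3=\gamma_{222}x_2^3\geq0$ since $\gamma_{222}\geq0$. For the remaining case $x_1>0,\ x_2>0$, dividing \eqref{eq:31} by $x_2^3$ and setting $t=x_1/x_2>0$ produces exactly the cubic $P(t)=\gamma_{111}t^3+3\gamma_{112}t^2+3\gamma_{122}t+\gamma_{222}$ of \eqref{eq:32}, and $\mathbf{\Gamma}x^3\geq0$ is equivalent to $P(t)\geq0$.

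It then remains to verify that $P(t)$ satisfies the hypotheses of Lemma \ref{le:22}(ii) with the identification $(a,b,c,d)^\top=(\gamma_{111},3\gamma_{112},3\gamma_{122},\gamma_{222})^\top$. The requirement $a\geq0$ is $\gamma_{111}\geq0$ and $d\geq0$ is $\gamma_{222}\geq0$, both assumed. Since $\sqrt{ad}=\sqrt{\gamma_{111}\gamma_{222}}$, the inequality $b\geq a-2\sqrt{ad}$ reads $3\gamma_{112}\geq\gamma_{111}-2\sqrt{\gamma_{111}\gamma_{222}}$, i.e. $\gamma_{112}\geq\frac{\gamma_{111}-2\sqrt{\gamma_{111}\gamma_{222}}}{3}$, which is the third hypothesis; likewise $c\geq d-2\sqrt{ad}$ is precisely $\gamma_{122}\geq\frac{\gamma_{222}-2\sqrt{\gamma_{111}\gamma_{222}}}{3}$, the fourth hypothesis. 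Hence Lemma \ref{le:22}(ii) gives $P(t)\geq0$ for all $t\geq0$, so $\mathbf{\Gamma}x^3\geq0$ on the interior rays too, and $\mathbf{\Gamma}$ is copositive.

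There is essentially no obstacle: the entire argument is the coefficient bookkeeping above, the only mildly delicate point being to carry the factor $3$ correctly through $b=3\gamma_{112}$ and $c=3\gamma_{122}$ so that the $2\sqrt{ad}$ term remains $2\sqrt{\gamma_{111}\gamma_{222}}$ and the stated fractions with denominator $3$ emerge. One may also note — paralleling the Remark following Theorem \ref{th:31} — that dividing by $x_1^3$ instead yields the reciprocal cubic and the very same inequality system, which accounts for the symmetry between the roles of $\gamma_{112}$ and $\gamma_{122}$ in the hypotheses.
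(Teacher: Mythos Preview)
Your proof is correct and follows essentially the same route as the paper: reduce to the univariate cubic $P(t)=\gamma_{111}t^3+3\gamma_{112}t^2+3\gamma_{122}t+\gamma_{222}$ as in Theorem~\ref{th:31}, identify $(a,b,c,d)=(\gamma_{111},3\gamma_{112},3\gamma_{122},\gamma_{222})$, and observe that the stated hypotheses are exactly $a\geq0$, $d\geq0$, $b\geq a-2\sqrt{ad}$, $c\geq d-2\sqrt{ad}$, so Lemma~\ref{le:22}(ii) applies. Your explicit handling of the boundary rays and the remark on the factor~$3$ bookkeeping are slightly more detailed than the paper's terse version, but the argument is the same.
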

	
	\begin{proof}  Using the proof technique of Theorem \ref{th:31}, we only need to show the nonnegativity of the polynomial $P(t)$ given by \eqref{eq:32} for all $t\geq0,$ where $$P(t)=at^3+bt^2+ct+d,\  a=\gamma_{111},\ b=3\gamma_{112},\ c=3\gamma_{122}, \ d=\gamma_{222}$$ The assumptions mean that $a=\gamma_{111}\geq0$, $d=\gamma_{222}\geq0,$
$$b=3\gamma_{112}\geq  \gamma_{111}-2\sqrt{\gamma_{111}\gamma_{222}},\ c=3\gamma_{122}\geq  \gamma_{222}-2\sqrt{\gamma_{111}\gamma_{222}}.$$
From Lemma \ref{le:22}(ii), it follows that $P(t)\geq0$ for all $t\geq0,$  and so, the tensor $\mathbf{\Gamma}$ is copositive, as required.
	\end{proof}

	\begin{theorem} \label{th:33} Let $\mathbf{\Gamma}$ be a 3rd order 2 dimensional and symmetric tensor. Assume that one of the inequalities systems (1) and (2) holds,
		\begin{itemize}
		  \item[(1)]\ \  $ \gamma_{111}\geq0,\  \gamma_{222}\geq0,\  \gamma_{122}\geq0,\ \gamma_{112}\geq-\frac23\sqrt{3\gamma_{122}\gamma_{111}}$;
		  \item[(2)]\ \  $ \gamma_{111}\geq0,\  \gamma_{222}\geq0,\  \gamma_{112}\geq0,\ \gamma_{122}\geq -\frac23\sqrt{3\gamma_{112}\gamma_{222}}$.
		\end{itemize}
		Then  $\mathbf{\Gamma}$ is  copositive.
		\end{theorem}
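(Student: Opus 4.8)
The plan is to follow the reduction used in Theorem~\ref{th:31}. By Lemma~\ref{le:21} it suffices to verify $\mathbf{\Gamma}x^3\geq 0$ for nonnegative $x=(x_1,x_2)^\top$ with $\|x\|=|x_1|+|x_2|=1$. First I would dispose of the boundary cases exactly as before: for $x=(x_1,0)$ we get $\mathbf{\Gamma}x^3=\gamma_{111}x_1^3\geq 0$ (since $\gamma_{111}\geq 0$) and for $x=(0,x_2)$ we get $\mathbf{\Gamma}x^3=\gamma_{222}x_2^3\geq 0$, so we may assume $x_1>0$ and $x_2>0$. Dividing $\mathbf{\Gamma}x^3$ by $x_2^3$ and putting $t=x_1/x_2>0$, the problem becomes showing $P(t)\geq 0$ for all $t\geq 0$, where $P(t)=\gamma_{111}t^3+3\gamma_{112}t^2+3\gamma_{122}t+\gamma_{222}$ as in \eqref{eq:32}. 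Unlike Theorem~\ref{th:32}, which invoked the cubic criterion Lemma~\ref{le:22}(ii), here the right tool is the quadratic criterion Lemma~\ref{le:23}, applied after peeling off the constant term.

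For system (1) I would write $P(t)=t\,q(t)+\gamma_{222}$ with $q(t)=\gamma_{111}t^2+3\gamma_{112}t+3\gamma_{122}$. Since $\gamma_{222}\geq 0$ and $t\geq 0$, it is enough to prove $q(t)\geq 0$ for all $t\geq 0$. Applying Lemma~\ref{le:23} to $q$ with $\alpha=\gamma_{111}$, $\beta=3\gamma_{112}$, $\gamma=3\gamma_{122}$, the three required inequalities are $\gamma_{111}\geq 0$, $3\gamma_{122}\geq 0$, and $3\gamma_{112}+2\sqrt{3\gamma_{111}\gamma_{122}}\geq 0$; dividing the last one by $3$, these are precisely the hypotheses of system (1). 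Hence $q(t)\geq 0$ on $[0,\infty)$, so $P(t)\geq 0$ on $[0,\infty)$, and $\mathbf{\Gamma}$ is copositive.

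For system (2) I would not redo the computation but instead use the symmetry recorded in the Remark following Theorem~\ref{th:31}: interchanging the two coordinates $x_1\leftrightarrow x_2$ preserves nonnegativity of $x$ and hence copositivity of $\mathbf{\Gamma}$, while it sends the entry tuple $(\gamma_{111},\gamma_{222},\gamma_{112},\gamma_{122})$ to $(\gamma_{222},\gamma_{111},\gamma_{122},\gamma_{112})$; under this relabeling system (2) is exactly system (1), so case (2) reduces to case (1). Equivalently, one divides $\mathbf{\Gamma}x^3$ by $x_1^3$, sets $t=x_2/x_1$, writes $\gamma_{222}t^3+3\gamma_{122}t^2+3\gamma_{112}t+\gamma_{111}=t\bigl(\gamma_{222}t^2+3\gamma_{122}t+3\gamma_{112}\bigr)+\gamma_{111}$, and applies Lemma~\ref{le:23} to the bracketed quadratic with $\alpha=\gamma_{222}$, $\beta=3\gamma_{122}$, $\gamma=3\gamma_{112}$, for which the conditions $\gamma_{222}\geq 0$, $\gamma_{112}\geq 0$, $3\gamma_{122}+2\sqrt{3\gamma_{222}\gamma_{112}}\geq 0$ are exactly system (2).

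I do not expect a genuine obstacle in this argument; the entire content is the factorization $P(t)=t\,q(t)+(\text{constant})$ together with Lemma~\ref{le:23}. The only points that need attention are (a) remembering to clear the boundary cases $x_1=0$ and $x_2=0$ as in Theorem~\ref{th:31}, and (b) checking that \emph{both} the constant term peeled off and the constant coefficient $\gamma$ of the quadratic fed into Lemma~\ref{le:23} are nonnegative — which is precisely why system (1) must assume $\gamma_{222}\geq 0$ \emph{and} $\gamma_{122}\geq 0$ (respectively, system (2) assumes $\gamma_{111}\geq 0$ \emph{and} $\gamma_{112}\geq 0$), the remaining fourth inequality in each system being exactly the discriminant-type condition $\beta+2\sqrt{\alpha\gamma}\geq 0$.
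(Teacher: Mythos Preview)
Your proposal is correct and follows essentially the same approach as the paper: reduce to the univariate cubic $P(t)$, split off one term to leave a quadratic, and apply Lemma~\ref{le:23}. The only cosmetic differences are that the paper handles case~(2) first via the decomposition $P(t)=\gamma_{111}t^3+f(t)$ with $f(t)=3\gamma_{112}t^2+3\gamma_{122}t+\gamma_{222}$ (peeling off the leading cubic rather than the constant) and then dispatches case~(1) with your $g=q$, whereas you do case~(1) directly and obtain case~(2) by the $x_1\leftrightarrow x_2$ symmetry; the resulting conditions are identical.
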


\begin{proof}  Using the proof technique of Theorem \ref{th:31}, it is known that the copositivity of $\mathbf{\Gamma}$ is equivalent to the nonnegativity of the polynomial $P(t)$ for all $t\geq0,$  where $$P(t)=\gamma_{111}t^3+3\gamma_{112}t^2+3\gamma_{122}t+\gamma_{222}.$$
Now we show the nonnegativity of $P(t)$. In fact, \begin{align}
P(t)=&\ \gamma_{111}t^3+(3\gamma_{112}t^2+3\gamma_{122}t+\gamma_{222}),\nonumber\\
    =&\ t(\gamma_{111}t^2+3\gamma_{112}t+3\gamma_{122})+\gamma_{222}.\nonumber
\end{align}
Let $f(t)=3\gamma_{112}t^2+3\gamma_{122}t+\gamma_{222}$ and $g(t)=\gamma_{111}t^2+3\gamma_{112}t+3\gamma_{122}.$
Then from Lemma \ref{le:23}, it follows that the nonnegativity of $f(t)$ is equivalent to the inequalities system,
\begin{equation}\label{eq:33}\gamma_{222}\geq0,\  \gamma_{112}\geq0,\ 3\gamma_{122}+2\sqrt{3\gamma_{112}\gamma_{222}}\geq0.\end{equation}
So, the inequality $\gamma_{111}\geq0$ along with the above inequalities system \eqref{eq:33} imply the polynomial $P(t)=\gamma_{111}t^3+ f(t)\geq0$ for all $t\geq0,$ that is, $\mathbf{\Gamma}$ is copositive.

For the aussumption (1), the same technique is applied to $g(t)$ to yield the desired conclusion.
	\end{proof}

For an 3rd order $3$ dimensional symmetric  tensor $\mathbf{\Gamma}$ and a vector $x=(x_1, x_2, x_3)^\top$,
\begin{equation}\label{eq:34}\begin{aligned}\mathbf{\Gamma}x^3=&\sum_{i,j,k=1}^3\gamma_{ijk}x_ix_jx_k\\
=&\gamma_{111}x_1^3+\gamma_{222}x_2^3+\gamma_{333}x_3^3+3\gamma_{112}x_1^2x_2+3\gamma_{122}x_1x_2^2+3\gamma_{113}x_1^2x_3\\
&+3\gamma_{133}x_1x_3^2+3\gamma_{223}x_2^2x_3+3\gamma_{233}x_2x_3^3+6\gamma_{123}x_1x_2x_3.\end{aligned}\end{equation}

\begin{theorem} \label{th:34} An 3rd order $3$ dimensional symmetric  tensor $\mathbf{\Gamma}$ is copositive if the following inequalities systems hold,
$$\begin{aligned}
	\gamma_{111}\geq0,\ \gamma_{222}\geq0, \ \gamma_{333}\geq0,\ \gamma_{123}\geq0,\\
	32\gamma_{111}\gamma_{122}^3+32\gamma_{112}^3\gamma_{222}+\gamma_{111}^2\gamma_{222}^2-24\gamma_{111}\gamma_{112}\gamma_{122}\gamma_{222}
 -48\gamma_{112}^2\gamma_{122}^2 \geq0,\\
 32\gamma_{111}\gamma_{133}^3+32\gamma_{113}^3\gamma_{333}+\gamma_{111}^2\gamma_{333}^2-24\gamma_{111}\gamma_{113}\gamma_{133}\gamma_{333}
 -48\gamma_{113}^2\gamma_{133}^2 \geq0,\\
 32\gamma_{222}\gamma_{233}^3+32\gamma_{223}^3\gamma_{333}+\gamma_{222}^2\gamma_{333}^2-24\gamma_{222}\gamma_{223}\gamma_{233}\gamma_{333}
 -48\gamma_{223}^2\gamma_{233}^2 \geq0.
		\end{aligned}$$
\end{theorem}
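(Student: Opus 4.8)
The plan is to reduce the $3$-dimensional cubic form to a sum of three $2$-dimensional cubic forms, each handled exactly as in Theorem~\ref{th:31}. By Lemma~\ref{le:21} it suffices to show $\mathbf{\Gamma}x^3\geq0$ for all nonnegative $x=(x_1,x_2,x_3)^\top$. Since $\gamma_{123}\geq0$ and $x_1,x_2,x_3\geq0$, the mixed term obeys $6\gamma_{123}x_1x_2x_3\geq0$ and can simply be dropped, so that it is enough to prove
\[
R(x):=\gamma_{111}x_1^3+\gamma_{222}x_2^3+\gamma_{333}x_3^3+3\gamma_{112}x_1^2x_2+3\gamma_{122}x_1x_2^2+3\gamma_{113}x_1^2x_3+3\gamma_{133}x_1x_3^2+3\gamma_{223}x_2^2x_3+3\gamma_{233}x_2x_3^2\geq0 .
\]

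The next step is to split each diagonal cube evenly between the two ``edges'' through its vertex: write $\gamma_{iii}x_i^3=\tfrac12\gamma_{iii}x_i^3+\tfrac12\gamma_{iii}x_i^3$ and regroup terms, obtaining $R(x)=Q_{12}(x_1,x_2)+Q_{13}(x_1,x_3)+Q_{23}(x_2,x_3)$ where
\[
Q_{12}(x_1,x_2)=\tfrac12\gamma_{111}x_1^3+3\gamma_{112}x_1^2x_2+3\gamma_{122}x_1x_2^2+\tfrac12\gamma_{222}x_2^3
\]
and $Q_{13}$, $Q_{23}$ are defined analogously. It then suffices to show that each $Q_{ij}$ is nonnegative on the nonnegative quadrant and add up.

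To handle $Q_{12}$ I would repeat the argument of Theorem~\ref{th:31}: if $x_2=0$ then $Q_{12}=\tfrac12\gamma_{111}x_1^3\geq0$; otherwise divide by $x_2^3$ and put $t=x_1/x_2\geq0$, reducing to the nonnegativity of $P_{12}(t)=\tfrac12\gamma_{111}t^3+3\gamma_{112}t^2+3\gamma_{122}t+\tfrac12\gamma_{222}$, to which I apply Lemma~\ref{le:22}(i), case~(2), with $(a,b,c,d)=(\tfrac12\gamma_{111},3\gamma_{112},3\gamma_{122},\tfrac12\gamma_{222})$. The sign conditions $a,d\geq0$ are exactly $\gamma_{111},\gamma_{222}\geq0$, while the discriminant inequality $4ac^3+4b^3d+27a^2d^2-18abcd-b^2c^2\geq0$ becomes, after multiplication by $16/27$, precisely $32\gamma_{111}\gamma_{122}^3+32\gamma_{112}^3\gamma_{222}+\gamma_{111}^2\gamma_{222}^2-24\gamma_{111}\gamma_{112}\gamma_{122}\gamma_{222}-48\gamma_{112}^2\gamma_{122}^2\geq0$, the first assumed inequality. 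The remaining two inequalities serve the same purpose for $Q_{13}$ and $Q_{23}$, so summing the three nonnegative binary cubics yields $R(x)\geq0$ and hence $\mathbf{\Gamma}x^3\geq0$.

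The one point that genuinely needs care --- and the main obstacle --- is the side condition $\max\{a,d\}>0$ in Lemma~\ref{le:22}(i) case~(2), i.e. $\max\{\gamma_{iii},\gamma_{jjj}\}>0$ for each relevant pair. In the generic situation this holds and the proof closes as above; in the degenerate case $\gamma_{iii}=\gamma_{jjj}=0$ the corresponding discriminant hypothesis degenerates to $-48\gamma_{iij}^2\gamma_{ijj}^2\geq0$, which forces one of $\gamma_{iij},\gamma_{ijj}$ to vanish, and one then needs the surviving off-diagonal coefficient to be nonnegative for $Q_{ij}\geq0$ to hold --- so this borderline case calls for a short separate check or a mild extra hypothesis. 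Beyond this, the remaining work is purely computational: carrying the $16/27$ normalization through carefully so that the three discriminant expressions match the inequalities in the statement verbatim.
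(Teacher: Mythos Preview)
Your approach is identical to the paper's: split $\mathbf{\Gamma}x^3$ into three $2$-dimensional cubics with halved diagonal entries plus the nonnegative term $6\gamma_{123}x_1x_2x_3$, then invoke the discriminant criterion (Theorem~\ref{th:31}, i.e.\ Lemma~\ref{le:22}(i) case~(2)) on each piece. The paper packages the pieces as three $2$-dimensional tensors $\mathcal{A},\mathcal{B},\mathcal{C}$, but the content and the $16/27$ normalization are exactly what you wrote.

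Your caution about the side condition $\max\{a,d\}>0$ is well placed and in fact sharper than the paper's own treatment, which simply cites Theorem~\ref{th:31} without verifying it. The degenerate case is not a mere technicality: take $\gamma_{111}=\gamma_{222}=\gamma_{333}=\gamma_{123}=0$, $\gamma_{112}=-1$, and all remaining entries zero. Every hypothesis of the theorem is satisfied (each discriminant expression equals~$0$), yet $\mathbf{\Gamma}x^3=-3x_1^2x_2<0$ at $x=(1,1,0)^\top$. So the ``mild extra hypothesis'' you allude to---for instance $\gamma_{iii}>0$ for all $i$, or else requiring that condition~(1) of Theorem~\ref{th:31} hold on any pair where both diagonals vanish---is genuinely needed for the statement to be correct, and the paper's proof shares this gap.
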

	
	\begin{proof} Rewritten $\mathbf{\Gamma}x^3$ as follows
$$\begin{aligned}\mathbf{\Gamma}x^3=&(\frac12\gamma_{111}x_1^3+3\gamma_{112}x_1^2x_2+3\gamma_{122}x_1x_2^2+\frac12\gamma_{222}x_2^3)\\
+&(\frac12\gamma_{111}x_1^3+3\gamma_{113}x_1^2x_3+3\gamma_{133}x_1x_3^2+\frac12\gamma_{333}x_3^3)\\
+&(\frac12\gamma_{222}x_2^3+3\gamma_{223}x_2^2x_3+3\gamma_{233}x_2x_3^3+\frac12\gamma_{333}x_3^3)\\&
+6\gamma_{123}x_1x_2x_3\\
=&\mathcal{A}y^3+\mathcal{B}z^3+\mathcal{C}w^3+6\gamma_{123}x_1x_2x_3,
\end{aligned}$$
where $y=(x_1, x_2)^\top$, $z=(x_1, x_3)^\top$, $w=(x_2, x_3)^\top$, $\mathcal{A}=(a_{ijk})$ and $\mathcal{B}=(b_{ijk})$ and $\mathcal{C}=(c_{ijk})$ are three 3rd order $2$ dimensional symmetric  tensors with their entries
$$\begin{aligned}
a_{111}=\frac12\gamma_{111},\ a_{112}=\gamma_{112},\ a_{122}=\gamma_{122},\ a_{222}=\frac12\gamma_{222};\\
b_{111}=\frac12\gamma_{111},\ b_{112}=\gamma_{113},\ b_{122}=\gamma_{133},\ b_{222}=\frac12\gamma_{333};\\
c_{111}=\frac12\gamma_{222},\ c_{112}=\gamma_{223},\ c_{122}=\gamma_{233},\ c_{222}=\frac12\gamma_{333}.\\
\end{aligned}$$
 For the polynomial $\mathcal{A}y^3=\frac12\gamma_{111}x_1^3+3\gamma_{112}x_1^2x_2+3\gamma_{122}x_1x_2^2+\frac12\gamma_{222}x_2^3$,  the assumptions imply that $$\begin{aligned}&a_{111}=\frac12\gamma_{111}\geq0,\ a_{222}=\frac12\gamma_{222}\geq0,\\ &4a_{111}a_{122}^3+4a_{112}^3a_{222}+a_{111}^2a_{222}^2-6a_{111}a_{112}a_{122}a_{222}-3a_{112}^2a_{122}^2\\
 &=\frac1{16}(32\gamma_{111}\gamma_{122}^3+32\gamma_{112}^3\gamma_{222}+\gamma_{111}^2\gamma_{222}^2-24\gamma_{111}\gamma_{112}\gamma_{122}\gamma_{222}\\
 &\ \ \ \ \ \ \ \ \ \ \ -48\gamma_{112}^2\gamma_{122}^2) \geq0.
 \end{aligned}$$
 From Theorem \ref{th:31}, it follows that the tensor $\mathcal{A}$ is copositive, i.e., $\mathcal{A}y^3\geq0$ for all $y\geq0.$

 Similarly, we also have $\mathcal{B}z^3\geq0$ for all $z\geq0$ and $\mathcal{C}w^3\geq0$ for all $w\geq0.$ So,
 $$\mathbf{\Gamma}x^3=\mathcal{A}y^3+\mathcal{B}z^3+\mathcal{C}w^3+6\gamma_{123}x_1x_2x_3\geq0\mbox{ for all }x\geq0.$$The desired conclusions are proved.
		\end{proof}

Using the similar proof technique of Theorem \ref{th:34}, we may apply  Theorem \ref{th:32} to three tensors $\mathcal{A}=(a_{ijk})$ and $\mathcal{B}=(b_{ijk})$ and $\mathcal{C}=(c_{ijk})$, and then the following conclusions are showed easily.

\begin{theorem} \label{th:35} Let $\mathbf{\Gamma}$ be a 3rd order 3 dimensional and symmetric tensor. Assume that
		$$\begin{aligned}
		&\gamma_{111}\geq0, \gamma_{222}\geq0,\ \gamma_{333}\geq0,\ \gamma_{123}\geq0,\\
		&\gamma_{112}\geq \frac{\gamma_{111}-2\sqrt{\gamma_{111}\gamma_{222}}}6,\ \gamma_{122}\geq \frac{\gamma_{222}-2\sqrt{\gamma_{111}\gamma_{222}}}6,\\
&\gamma_{113}\geq \frac{\gamma_{111}-2\sqrt{\gamma_{111}\gamma_{333}}}6,\ \gamma_{133}\geq \frac{\gamma_{333}-2\sqrt{\gamma_{111}\gamma_{333}}}6\\
&\gamma_{223}\geq \frac{\gamma_{222}-2\sqrt{\gamma_{222}\gamma_{333}}}6,\ \gamma_{233}\geq \frac{\gamma_{333}-2\sqrt{\gamma_{222}\gamma_{333}}}6.
\end{aligned}$$
		Then  $\mathbf{\Gamma}$ is copositive.
	\end{theorem}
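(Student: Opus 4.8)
The plan is to follow verbatim the splitting used in the proof of Theorem \ref{th:34}. First I would decompose the cubic form as
$$\mathbf{\Gamma}x^3=\mathcal{A}y^3+\mathcal{B}z^3+\mathcal{C}w^3+6\gamma_{123}x_1x_2x_3,$$
where $y=(x_1,x_2)^\top$, $z=(x_1,x_3)^\top$, $w=(x_2,x_3)^\top$, and $\mathcal{A},\mathcal{B},\mathcal{C}$ are the 3rd order $2$ dimensional symmetric tensors with entries $a_{111}=\tfrac12\gamma_{111}$, $a_{112}=\gamma_{112}$, $a_{122}=\gamma_{122}$, $a_{222}=\tfrac12\gamma_{222}$; $b_{111}=\tfrac12\gamma_{111}$, $b_{112}=\gamma_{113}$, $b_{122}=\gamma_{133}$, $b_{222}=\tfrac12\gamma_{333}$; $c_{111}=\tfrac12\gamma_{222}$, $c_{112}=\gamma_{223}$, $c_{122}=\gamma_{233}$, $c_{222}=\tfrac12\gamma_{333}$. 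Since $x\geq0$ forces $y,z,w\geq0$, it is enough to show each of $\mathcal{A},\mathcal{B},\mathcal{C}$ is copositive, because then the three pieces are nonnegative and $6\gamma_{123}x_1x_2x_3\geq0$ as $\gamma_{123}\geq0$.

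Next I would verify the hypotheses of Theorem \ref{th:32} for $\mathcal{A}$. Here $a_{111}=\tfrac12\gamma_{111}\geq0$ and $a_{222}=\tfrac12\gamma_{222}\geq0$, and the $\tfrac12$ factors give $2\sqrt{a_{111}a_{222}}=2\sqrt{\tfrac14\gamma_{111}\gamma_{222}}=\sqrt{\gamma_{111}\gamma_{222}}$, hence
$$\frac{a_{111}-2\sqrt{a_{111}a_{222}}}{3}=\frac{\tfrac12\gamma_{111}-\sqrt{\gamma_{111}\gamma_{222}}}{3}=\frac{\gamma_{111}-2\sqrt{\gamma_{111}\gamma_{222}}}{6},$$
and likewise $\frac{a_{222}-2\sqrt{a_{111}a_{222}}}{3}=\frac{\gamma_{222}-2\sqrt{\gamma_{111}\gamma_{222}}}{6}$. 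Thus the assumptions $\gamma_{112}\geq\frac{\gamma_{111}-2\sqrt{\gamma_{111}\gamma_{222}}}{6}$ and $\gamma_{122}\geq\frac{\gamma_{222}-2\sqrt{\gamma_{111}\gamma_{222}}}{6}$ are precisely $a_{112}\geq\frac{a_{111}-2\sqrt{a_{111}a_{222}}}{3}$ and $a_{122}\geq\frac{a_{222}-2\sqrt{a_{111}a_{222}}}{3}$, so Theorem \ref{th:32} gives $\mathcal{A}y^3\geq0$ for all $y\geq0$.

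Then I would repeat the identical computation for $\mathcal{B}$ with the pair $(\gamma_{111},\gamma_{333})$ and entries $(\gamma_{113},\gamma_{133})$, and for $\mathcal{C}$ with the pair $(\gamma_{222},\gamma_{333})$ and entries $(\gamma_{223},\gamma_{233})$, using in each case the corresponding four assumed inequalities; this yields $\mathcal{B}z^3\geq0$ for all $z\geq0$ and $\mathcal{C}w^3\geq0$ for all $w\geq0$. Adding the three nonnegative terms and $6\gamma_{123}x_1x_2x_3\geq0$ gives $\mathbf{\Gamma}x^3\geq0$ for every $x\geq0$, which is the assertion.

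I do not expect a genuine obstacle here: the argument is a mechanical reduction to Theorem \ref{th:32} through the same decomposition as Theorem \ref{th:34}. The only point needing care is the factor-$\tfrac12$ bookkeeping — confirming it produces the denominator $6$ (not $3$) and the square roots $\sqrt{\gamma_{111}\gamma_{222}}$, $\sqrt{\gamma_{111}\gamma_{333}}$, $\sqrt{\gamma_{222}\gamma_{333}}$ in the stated hypotheses; once that arithmetic is settled the proof is immediate.
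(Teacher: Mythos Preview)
Your proposal is correct and follows exactly the approach the paper indicates: decompose $\mathbf{\Gamma}x^3$ as in the proof of Theorem~\ref{th:34} and apply Theorem~\ref{th:32} to each of the three $2$-dimensional tensors $\mathcal{A},\mathcal{B},\mathcal{C}$, with the factor-$\tfrac12$ bookkeeping producing the denominator $6$ in the hypotheses. The paper itself only sketches this in one sentence, and your write-up fills in the details accurately.
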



\section{Copositivity of 4th order tensors}	

Let $\mathcal{A}$ be an 4th order $2$ dimensional symmetric tensor. Then for a vector $x=(x_1,x_2)^\top$,
\begin{equation}\label{eq:41}\begin{aligned}\mathcal{A}x^4&=\sum_{i,j,k,l=1}^2a_{ijkl}x_ix_jx_kx_l\\
&=a_{1111}x_1^4+4a_{1211}x_1^3x_2+6a_{1221}x_1^2x_2^2+4a_{1222}x_1x_2^3+a_{2222}x_2^4.\end{aligned}\end{equation}

	\begin{theorem} \label{th:41} Let $\mathcal{A}$ be an 4th order $2$ dimensional symmetric tensor with $a_{1111}>0$ and $ a_{2222}>0$. Assume that one of the following conditions (1) and (2) holds,
\begin{itemize}
  \item[(1)]  $a_{1222}\geq0,\ 54a_{1111} a_{1221}^3+64a_{1121}^3a_{1222}+27a_{1111}^2a_{1222}^2-108a_{1111}a_{1121}a_{1221}a_{1222}\\ -36a_{1121}^2a_{1221}^2 \geq0;$
\item[(2)]  $a_{1112}\geq0, \  64a_{1211} a_{1222}^3+54a_{1221}^3a_{2222}+27a_{1211}^2a_{2222}^2-108a_{1211}a_{1221}a_{1222}a_{2222}\\-36a_{1221}^2a_{1222}^2 \geq0.$
 \end{itemize}
		Then $\mathcal{A}$ is copositive.
	\end{theorem}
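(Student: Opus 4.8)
The plan is to push the order-reduction idea of Section~3 one notch further: instead of lowering the order of a tensor, I reduce the univariate quartic attached to $\mathcal{A}x^4$ to a univariate cubic by peeling off a single linear factor, and then quote Theorem~\ref{th:31}. By Lemma~\ref{le:21} it is enough to prove $\mathcal{A}x^4\geq 0$ for every $x=(x_1,x_2)^\top\geq 0$ with $\|x\|=1$. The two coordinate cases are immediate: $\mathcal{A}x^4=a_{1111}x_1^4\geq 0$ when $x_2=0$ and $\mathcal{A}x^4=a_{2222}x_2^4\geq 0$ when $x_1=0$, since $a_{1111}>0$ and $a_{2222}>0$. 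For $x_1>0$ and $x_2>0$, dividing \eqref{eq:41} by $x_2^4$ turns the claim into the nonnegativity on $[0,\infty)$ of
$$Q(t)=a_{1111}t^4+4a_{1211}t^3+6a_{1221}t^2+4a_{1222}t+a_{2222},\qquad t=x_1/x_2.$$

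Under hypothesis (1) I would write $Q(t)=t\,R(t)+a_{2222}$ with $R(t)=a_{1111}t^3+4a_{1211}t^2+6a_{1221}t+4a_{1222}$, and regard $R$ as the cubic attached, exactly as in the proof of Theorem~\ref{th:31}, to the $3$rd order $2$ dimensional symmetric tensor $\mathcal{R}$ with $r_{111}=a_{1111}$, $r_{112}=\tfrac43 a_{1211}$, $r_{122}=2a_{1221}$, $r_{222}=4a_{1222}$, so that $R(t)=\mathcal{R}y^3/y_2^3$. The hypotheses $a_{1111}>0$, $a_{1222}\geq 0$ give $\max\{r_{111},r_{222}\}>0$ and $r_{111},r_{222}\geq 0$; a direct substitution (see the last paragraph) shows that the polynomial inequality printed in (1) is a fixed positive multiple of $4r_{111}r_{122}^3+4r_{112}^3r_{222}+r_{111}^2r_{222}^2-6r_{111}r_{112}r_{122}r_{222}-3r_{112}^2r_{122}^2$. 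Hence Theorem~\ref{th:31}(2) applies, $\mathcal{R}$ is copositive, i.e.\ $R(t)\geq 0$ for all $t\geq 0$, and therefore $Q(t)=tR(t)+a_{2222}\geq 0$ for all $t\geq 0$ because $a_{2222}>0$. This gives $\mathcal{A}x^4\geq 0$ in the remaining case, so $\mathcal{A}$ is copositive.

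Hypothesis (2) is treated identically after dividing \eqref{eq:41} by $x_1^4$ rather than $x_2^4$: with $s=x_2/x_1$ one obtains $\mathcal{A}x^4/x_1^4=s\,S(s)+a_{1111}$, where $S(s)=a_{2222}s^3+4a_{1222}s^2+6a_{1221}s+4a_{1211}$ is the cubic attached to the $3$rd order $2$ dimensional symmetric tensor $\mathcal{S}$ with $s_{111}=a_{2222}$, $s_{112}=\tfrac43 a_{1222}$, $s_{122}=2a_{1221}$, $s_{222}=4a_{1211}$. Now $a_{2222}>0$ supplies $\max\{s_{111},s_{222}\}>0$, $a_{1112}\geq 0$ (i.e.\ $a_{1211}\geq 0$) supplies $s_{222}\geq 0$, the inequality in (2) is again a positive multiple of the corresponding discriminant-type quantity for $\mathcal{S}$, and Theorem~\ref{th:31}(2) forces $S(s)\geq 0$ on $[0,\infty)$, whence $\mathcal{A}x^4/x_1^4=sS(s)+a_{1111}\geq 0$.

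The one genuine verification is that the two displayed polynomial inequalities in (1) and (2) coincide, up to a positive scalar, with the expressions $4\gamma_{111}\gamma_{122}^3+4\gamma_{112}^3\gamma_{222}+\gamma_{111}^2\gamma_{222}^2-6\gamma_{111}\gamma_{112}\gamma_{122}\gamma_{222}-3\gamma_{112}^2\gamma_{122}^2$ that occur in Theorem~\ref{th:31}(2) once the normalizations $3r_{112}=4a_{1211}$, $3r_{122}=6a_{1221}$ (respectively $3s_{112}=4a_{1222}$, $3s_{122}=6a_{1221}$) are substituted; concretely the scalar works out to $\tfrac{16}{27}$, produced by the powers of $2$ and $3$ carried by the binomial coefficients $4,6,4$ in \eqref{eq:41}. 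I expect this constant-chasing to be the only real obstacle, and it is purely mechanical; alternatively one can skip $\mathcal{R}$ and $\mathcal{S}$ altogether and apply Lemma~\ref{le:22}(i)(2) to $R$ and $S$ directly, with $(a,b,c,d)$ equal to their coefficient vectors.
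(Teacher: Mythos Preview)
Your argument is correct and is essentially the paper's own proof: both factor $\mathcal{A}x^4$ as $x_1\cdot(\text{cubic in }x_1,x_2)+a_{2222}x_2^4$ (respectively $a_{1111}x_1^4+x_2\cdot(\text{cubic})$), identify the cubic with a 3rd order 2-dimensional tensor having entries $(\gamma_{111},\gamma_{112},\gamma_{122},\gamma_{222})=(a_{1111},\tfrac{4}{3}a_{1211},2a_{1221},4a_{1222})$, and invoke Theorem~\ref{th:31}(2). Your scalar $\tfrac{16}{27}$ is in fact the right one; the paper prints $\tfrac{16}{9}$, but since only the sign matters this has no effect on the argument.
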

\begin{proof}   Rewritten $\mathcal{A}x^4$ as follows,
$$\begin{aligned}
\mathcal{A}x^4=& x_1(a_{1111}x_1^3+4a_{1211}x_1^2x_2+6a_{1221}x_1x_2^2+4a_{1222}x_2^3)+a_{2222}x_2^4\\
    = & a_{1111}x_1^4+x_2(4a_{1211}x_1^3+6a_{1221}x_1^2x_2+4a_{1222}x_1x_2^{2}+a_{2222}x_2^3).
\end{aligned}$$
		Let $$f(x_1,x_2)=a_{1111}x_1^3+4a_{1211}x_1^2x_2+6a_{1221}x_1x_2^2+4a_{1222}x_2^3.$$ Then $f(x_1,x_2)$ can be a homogeneous polynomial defined by a 3rd order 2 dimensional and symmetric tensor $\Gamma=(\gamma_{ijk})$ with its entries
$$\gamma_{111}=a_{1111},\ \gamma_{112}=\frac43a_{1211},\ \gamma_{122}=2a_{1221},\ \gamma_{222}=4a_{1222}.$$
From the assumption (1), it follows that
\begin{align*}
  &4\gamma_{111}\gamma_{122}^3+4\gamma_{112}^3\gamma_{222}+\gamma_{111}^2\gamma_{222}^2-6\gamma_{111}\gamma_{112}\gamma_{122}\gamma_{222}-3\gamma_{112}^2\gamma_{122}^2\\
   &= 4a_{1111}(2a_{1221})^3+4(\frac43a_{1211})^34a_{1222}+a_{1111}^2(4a_{1222})^2\\
   &\ \ \ -6a_{1111}(\frac43a_{1211})(2a_{1221})(4a_{1222})-3(\frac43a_{1211})^2(2a_{1221})^2\\
   &=\frac{16}{9}( 54a_{1111} a_{1221}^3+64a_{1121}^3a_{1222}+27a_{1111}^2a_{1222}^2\\
            &\ \ \ -108a_{1111}a_{1121}a_{1221}a_{1222}-36a_{1121}^2a_{1221}^2)\\& \geq0.
\end{align*}
By Theorem \ref{th:31}, we have $f(x_1,x_2)\geq0$ for all $x\geq0$, and hence, $$\mathcal{A}x^4=x_1f(x_1,x_2)+a_{2222}x_2^4\geq0\mbox{ for all }x=(x_1,x_2)^\top\geq0.$$
That is, $\mathcal{A}$ is copositive.

Let $$g(x_1,x_2)=4a_{1211}x_1^3+6a_{1221}x_1^2x_2+4a_{1222}x_1x_2^{2}+a_{2222}x_2^3.$$
In the same way, we also have $g(x_1,x_2)\geq0$ for all $x\geq0$ by the assumption (2). So, $\mathcal{A}x^4=a_{1111}x_1^4+x_2g(x_1,x_2)\geq0\mbox{ for all }x=(x_1,x_2)^\top\geq0,$ and hence, the copositivity of $\mathcal{A}$ is proved, as required.
	\end{proof}	
Similarly, using Theorem \ref{th:32} (for the tensor $\Gamma=(\gamma_{ijk})$ in the above proof), the following theorem is obtained easily.
\begin{theorem} \label{th:42} Let $\mathcal{A}$ be an 4th order $2$ dimensional symmetric tensor with $a_{1111}\geq0$ and $ a_{2222}\geq0$. Assume that one of the following conditions (1) and (2) holds,
$$\begin{aligned}
(1)\ \ & a_{1222}\geq0,\ a_{1112}\geq \frac14a_{1111}-\sqrt{a_{1111} a_{1222}},\  a_{1221}\geq \frac23(a_{1222}-2\sqrt{a_{1111} a_{1222}});\\
(2)\ \ &a_{1112}\geq0,\ a_{1222}\geq \frac14a_{2222}-\sqrt{a_{1112} a_{2222}},\  a_{1221}\geq \frac23(a_{1112}-2\sqrt{a_{1112} a_{2222}}). \\
\end{aligned}$$
		Then $\mathcal{A}$ is copositive.
	\end{theorem}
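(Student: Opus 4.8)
The plan is to mirror the proof of Theorem \ref{th:41} almost line for line, only replacing the appeal to Theorem \ref{th:31} by an appeal to Theorem \ref{th:32}. First I would write $\mathcal{A}x^4$ in the two split forms already used there, namely
$$\mathcal{A}x^4 = x_1\bigl(a_{1111}x_1^3+4a_{1112}x_1^2x_2+6a_{1221}x_1x_2^2+4a_{1222}x_2^3\bigr)+a_{2222}x_2^4$$
and
$$\mathcal{A}x^4 = a_{1111}x_1^4 + x_2\bigl(4a_{1112}x_1^3+6a_{1221}x_1^2x_2+4a_{1222}x_1x_2^2+a_{2222}x_2^3\bigr).$$
For hypothesis (1) I would use the first form: the bracketed cubic $f(x_1,x_2)$ is the homogeneous polynomial $\Gamma x^3$ of the 3rd order $2$ dimensional symmetric tensor $\Gamma=(\gamma_{ijk})$ with $\gamma_{111}=a_{1111}$, $\gamma_{112}=\frac43 a_{1112}$, $\gamma_{122}=2a_{1221}$, $\gamma_{222}=4a_{1222}$, i.e.\ exactly the tensor $\Gamma$ appearing in the proof of Theorem \ref{th:41} (recall $a_{1112}=a_{1211}$ by symmetry). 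For hypothesis (2) I would use the second form, whose bracketed cubic is $\widetilde{\Gamma}x^3$ for the 3rd order $2$ dimensional symmetric tensor $\widetilde{\Gamma}=(\widetilde{\gamma}_{ijk})$ with $\widetilde{\gamma}_{111}=4a_{1112}$, $\widetilde{\gamma}_{112}=2a_{1221}$, $\widetilde{\gamma}_{122}=\frac43 a_{1222}$, $\widetilde{\gamma}_{222}=a_{2222}$.

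The second step is to check that hypothesis (1) is precisely the hypothesis of Theorem \ref{th:32} for $\Gamma$ once these entries are substituted: $\gamma_{111}\geq0$ and $\gamma_{222}\geq0$ reduce to $a_{1111}\geq0$ and $a_{1222}\geq0$; and, using $\sqrt{\gamma_{111}\gamma_{222}}=2\sqrt{a_{1111}a_{1222}}$, the requirements $\gamma_{112}\geq\frac{\gamma_{111}-2\sqrt{\gamma_{111}\gamma_{222}}}{3}$ and $\gamma_{122}\geq\frac{\gamma_{222}-2\sqrt{\gamma_{111}\gamma_{222}}}{3}$ become lower bounds on $a_{1112}$ and on $a_{1221}$ matching those displayed in (1). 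Theorem \ref{th:32} then gives $\Gamma x^3=f(x_1,x_2)\geq0$ for every $x=(x_1,x_2)^\top\geq0$, hence
$$\mathcal{A}x^4 = x_1 f(x_1,x_2) + a_{2222}x_2^4 \geq 0 \qquad \text{for all } x=(x_1,x_2)^\top\geq0,$$
because $a_{2222}\geq0$; so $\mathcal{A}$ is copositive. Under hypothesis (2) the identical computation with $\widetilde{\Gamma}$ replacing $\Gamma$, the leftover term now being $a_{1111}x_1^4\geq0$, shows $\widetilde{\Gamma}x^3\geq0$ on the nonnegative orthant and again $\mathcal{A}x^4\geq0$ there.

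The only part demanding care --- and the step where a sign or coefficient slip is easiest --- is the arithmetic of that second step: one has to keep straight three distinct normalizations, the $3$'s built into $\Gamma x^3$, the multinomial $4$'s and $6$'s built into $\mathcal{A}x^4$, and the extra factor $2$ that surfaces inside the square roots because $\gamma_{222}=4a_{1222}$ (respectively $\widetilde{\gamma}_{111}=4a_{1112}$) is fed into $2\sqrt{\gamma_{111}\gamma_{222}}$. I would therefore carry out each of the four inequality translations explicitly, and in particular double-check the constant in the $a_{1221}$-bound, rather than assert them ``by inspection''. No idea beyond Theorem \ref{th:32} and the two splittings is needed: copositivity of $\mathcal{A}$ follows at once once both summands of one splitting are certified nonnegative on $\{x\geq0\}$.
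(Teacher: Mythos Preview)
Your plan is exactly the paper's own argument: the paper proves Theorem~\ref{th:42} in one line by saying ``using Theorem~\ref{th:32} (for the tensor $\Gamma=(\gamma_{ijk})$ in the above proof)'', i.e.\ precisely the splittings and the auxiliary $3$rd-order tensors $\Gamma,\widetilde\Gamma$ you wrote down, with Theorem~\ref{th:32} replacing Theorem~\ref{th:31}.

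One caution for when you carry out the promised arithmetic check on the $a_{1221}$-bound. With $\gamma_{122}=2a_{1221}$, $\gamma_{222}=4a_{1222}$ and $\sqrt{\gamma_{111}\gamma_{222}}=2\sqrt{a_{1111}a_{1222}}$, the Theorem~\ref{th:32} requirement $\gamma_{122}\ge\frac{\gamma_{222}-2\sqrt{\gamma_{111}\gamma_{222}}}{3}$ becomes
\[
2a_{1221}\ \ge\ \frac{4a_{1222}-4\sqrt{a_{1111}a_{1222}}}{3},
\qquad\text{i.e.}\qquad
a_{1221}\ \ge\ \tfrac23\bigl(a_{1222}-\sqrt{a_{1111}a_{1222}}\bigr),
\]
whereas the printed hypothesis has $\tfrac23(a_{1222}-2\sqrt{a_{1111}a_{1222}})$; the same happens symmetrically in~(2). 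So the approach---yours and the paper's---literally yields the bound without the inner factor~$2$; the discrepancy appears to be a typo in the stated theorem rather than a flaw in the method. Your instinct to ``double-check the constant in the $a_{1221}$-bound'' is well placed.
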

Let $\mathcal{A}$ be an 4th order $3$ dimensional symmetric tensor. Then for a vector $x=(x_1,x_2,x_3)^\top$,
\begin{equation}\label{eq:42}\begin{aligned}\mathcal{A}x^4=&\sum_{i,j,k,l=1}^3a_{ijkl}x_ix_jx_kx_l\\
		=&a_{1111}x_1^4+a_{2222}x_2^4+a_{3333}x_3^4+4a_{1222}x_1x_2^3 +4a_{1333}x_1x_3^3\\
		& +4a_{2111}x_1^3x_2+4a_{2333}x_2x_3^3+4a_{3111}x_1^3x_3
		+4a_{3222}x_2^3x_3\\
		&+6a_{1122}x_1^2x_2^2+6a_{1133}x_1^2x_3^2+6a_{2233}x_2^2x_3^2\\
		&+12a_{1231}x_1^2x_2x_3+12a_{1232}x_1x_2^2x_3+12a_{1233}x_1x_2x_3^2.
		\end{aligned}\end{equation}

Now we give several sufficient conditions of  4th order copositive tensors with the help of ones of 3rd order copositive tensors in Section 3.	

\begin{theorem} \label{th:43} Let $\mathcal{A}$ be an 4th order $3$ dimensional symmetric tensor. Assume that
	$$\begin{aligned}
	& a_{1111}\geq 0,\ a_{2222}\geq 0, \ a_{3333}\geq 0, \\
&  a_{1112}\geq 0, \ a_{1113}\geq 0, \ a_{1222}\geq0, \ a_{2223}\geq 0, \ a_{1333}\geq 0,  \ a_{2333}\geq 0,\\
&\max\{a_{1222},a_{1333}\}>0, \max\{a_{1112},a_{2333}\}>0, \max\{a_{1113},a_{2223}\}>0,\\
& 6a_{1122}+\sqrt{a_{1111}a_{2222}}\geq0,\ 6a_{1133}+\sqrt{a_{1111}a_{3333}}\geq0,\ 6a_{2233}+\sqrt{a_{3333}a_{2222}}\geq0,\\
&8a_{1222}a_{1233}^3+8a_{1223}^3a_{1333}+16a_{1222}^2a_{1333}^2-24a_{1222}a_{1223}a_{1233}a_{1333}-3a_{1223}^2a_{1233}^2 \geq0,\\
&8a_{2111}a_{1233}^3+8a_{1123}^3a_{2333}+16a_{2111}^2a_{2333}^2-24a_{2111}a_{1123}a_{1233}a_{2333}-3a_{1123}^2a_{1233}^2 \geq0,\\
&8a_{3111}a_{1223}^3+8a_{1123}^3a_{2223}+16a_{3111}^2a_{3222}^2-24a_{3111}a_{1123}a_{1223}a_{3222}-3a_{1123}^2a_{1223}^2 \geq0.
	\end{aligned}$$
	Then $\mathcal{A}$ is  copositive.
\end{theorem}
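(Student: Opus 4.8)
The plan is to write $\mathcal{A}x^4$ as a sum of six summands that are each manifestly nonnegative on the nonnegative orthant: three "binary cubic" blocks of the shape $x_i\cdot(\text{cubic form in the other two variables})$, each controlled by Theorem \ref{th:31}, together with three "pure pairwise" quadratic forms in the squared variables, each controlled by Lemma \ref{le:23}. So the whole argument is a single explicit decomposition followed by two invocations of results already proved.

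First I would peel off the monomials that involve only two coordinates at even degree, namely $a_{1111}x_1^4+a_{2222}x_2^4+a_{3333}x_3^4$ together with $6a_{1122}x_1^2x_2^2+6a_{1133}x_1^2x_3^2+6a_{2233}x_2^2x_3^2$, and regroup them as
\[
\Bigl(\tfrac12 a_{1111}x_1^4+6a_{1122}x_1^2x_2^2+\tfrac12 a_{2222}x_2^4\Bigr)+\Bigl(\tfrac12 a_{1111}x_1^4+6a_{1133}x_1^2x_3^2+\tfrac12 a_{3333}x_3^4\Bigr)+\Bigl(\tfrac12 a_{2222}x_2^4+6a_{2233}x_2^2x_3^2+\tfrac12 a_{3333}x_3^4\Bigr).
\]
For fixed $i\neq j$ the corresponding bracket is $\alpha u^2+\beta uv+\gamma v^2$ with $u=x_i^2\geq0$, $v=x_j^2\geq0$ and $(\alpha,\beta,\gamma)=(\tfrac12 a_{iiii},6a_{iijj},\tfrac12 a_{jjjj})$; dividing by $v^2$ and applying Lemma \ref{le:23} (exactly as the $x_2^3$-division in the proof of Theorem \ref{th:31}) shows it is nonnegative for all $x\geq0$ precisely when $a_{iiii}\geq0$, $a_{jjjj}\geq0$ and $6a_{iijj}+2\sqrt{\tfrac14 a_{iiii}a_{jjjj}}=6a_{iijj}+\sqrt{a_{iiii}a_{jjjj}}\geq0$ — which are among the hypotheses.

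Next I would collect all the remaining monomials — the asymmetric cubics $x_i^3x_j$, $x_ix_j^3$ and the triple products $x_i^2x_jx_k$ — into the three blocks
\[
\begin{aligned}
&x_1\bigl(4a_{1222}x_2^3+6a_{1223}x_2^2x_3+6a_{1233}x_2x_3^2+4a_{1333}x_3^3\bigr)+x_2\bigl(4a_{1112}x_1^3+6a_{1123}x_1^2x_3+6a_{1233}x_1x_3^2+4a_{2333}x_3^3\bigr)\\
&\qquad+x_3\bigl(4a_{1113}x_1^3+6a_{1123}x_1^2x_2+6a_{1223}x_1x_2^2+4a_{2223}x_2^3\bigr).
\end{aligned}
\]
Using the symmetry identities $a_{2111}=a_{1112}$, $a_{3111}=a_{1113}$, $a_{3222}=a_{2223}$, $a_{1231}=a_{1123}$, $a_{1232}=a_{1223}$, a routine coefficient count — each asymmetric cubic of $\mathcal{A}x^4$ (coefficient $4$) sits entirely in one block, while each triple product $x_i^2x_jx_k$ (total coefficient $12$) is split as $6+6$ across the two blocks keyed by the simple indices $j$ and $k$ — confirms that this sum plus the regrouped pairwise part equals $\mathcal{A}x^4$ exactly. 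The parenthesized cubic in the $x_1$-block is the polynomial $\Gamma^{(1)}y^3$ of a 3rd order $2$-dimensional symmetric tensor with $(\gamma_{111},\gamma_{112},\gamma_{122},\gamma_{222})=(4a_{1222},2a_{1223},2a_{1233},4a_{1333})$, and a direct computation shows that the quartic-discriminant quantity in Theorem \ref{th:31}(2) for $\Gamma^{(1)}$ equals $16$ times the first displayed degree-$6$ inequality in the hypothesis; the analogous statements hold for $\Gamma^{(2)}$ with entries $(4a_{1112},2a_{1123},2a_{1233},4a_{2333})$ and $\Gamma^{(3)}$ with entries $(4a_{1113},2a_{1123},2a_{1223},4a_{2223})$ and the second and third inequalities. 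Since moreover $\gamma_{111},\gamma_{222}\geq0$ and $\max\{\gamma_{111},\gamma_{222}\}>0$ for each (these are exactly the three $\max$-hypotheses), Theorem \ref{th:31} yields that each $\Gamma^{(i)}$ is copositive, so each block is nonnegative for $x\geq0$; adding the three blocks and the three pairwise forms gives $\mathcal{A}x^4\geq0$ for all $x\geq0$, i.e. $\mathcal{A}$ is copositive.

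The main obstacle is discovering this decomposition, not verifying it. One has to see that the six pure pairwise monomials must be distributed into three quadratic forms in the $x_i^2$ with \emph{halved} diagonal weights, so that the term $2\sqrt{\alpha\gamma}$ produced by Lemma \ref{le:23} collapses to $\sqrt{a_{iiii}a_{jjjj}}$ and reproduces the hypothesis verbatim rather than $2\sqrt{a_{iiii}a_{jjjj}}$; and that the asymmetric cubics together with the twelve-fold triple products must be organized into precisely three "$x_i\times(\text{binary cubic})$" pieces whose binary cubics carry exactly the entries making the prescribed degree-$6$ inequalities the ($16$-scaled) copositivity discriminants of Theorem \ref{th:31}. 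Once that grouping is pinned down, everything else is the coefficient bookkeeping above plus the two cited results.
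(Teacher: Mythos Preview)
Your proposal is correct and follows essentially the same approach as the paper: the identical six-term decomposition into three halved-diagonal pairwise quadratic forms handled by Lemma~\ref{le:23} and three blocks $x_i\cdot(\text{binary cubic})$ handled by Theorem~\ref{th:31}, with the same tensor entries $(\gamma_{111},\gamma_{112},\gamma_{122},\gamma_{222})=(4a_{ijjj},2a_{ijjk},2a_{ijkk},4a_{ikkk})$ and the same observation that the discriminant condition scales by a factor of~$16$. Your write-up is in fact somewhat more careful than the paper's in verifying the coefficient count and in noting explicitly where the three $\max$-hypotheses are used.
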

\begin{proof}  Rewritten the homogeneous polynomial $\mathcal{A}x^4$ \eqref{eq:42} as follows,
\begin{align*}\mathcal{A}x^4=&(\frac12a_{1111}x_1^4+6a_{1122}x_1^2x_2^2+\frac12a_{2222}x_2^4)\\
+&(\frac12a_{2222}x_2^4+6a_{2233}x_2^2x_3^2+\frac12a_{3333}x_3^4)\\
+&(\frac12a_{1111}x_1^4+6a_{1133}x_1^2x_3^2+\frac12a_{3333}x_3^4)\\
+& x_1(4a_{1222}x_2^3+6a_{1223}x_2^2x_3 +6a_{1233}x_2x_3^2+4a_{1333}x_3^3)\\
+& x_2(4a_{2111}x_1^3+6a_{1123}x_1^2x_3+6a_{1233}x_1x_3^2+4a_{2333}x_3^3)\\
+& x_3(4a_{3111}x_1^3+6a_{1123}x_1^2x_2+6a_{1223}x_1x_2^2+4a_{3222}x_2^3).
		\end{align*}
Let $$p_1(x_1,x_2)=\frac12a_{1111}x_1^4+6a_{1122}x_1^2x_2^2+\frac12a_{2222}x_2^4.$$
It follows from Lemma \ref{le:21} that we can restrict $x$ to
		$$\|x\|=x_1+x_2+x_3=1\mbox{ and }x_i\geq0\mbox{ for }i=1,2,3.$$
Clearly, $p_1(0,0)=0,\ p_1(x_1,0)=\frac12a_{1111}x_1^4\geq0, \ p_1(0,x_2)=\frac12a_{2222}x_2^4\geq0.$ Let we assume that both $x_1$ and $x_2$ are not zero.
Without loss of generality, let $x_2>0$ and $t=\frac{x_1^2}{x_2^2}$. Suppose
$$p(t)=\frac{p_1(x_1,x_2)}{x_2^4}=\alpha t^2+\beta t+\gamma\mbox{ with }\alpha=\frac12a_{1111},\ \beta=6a_{1122},\ \gamma=\frac12a_{2222}.$$
It follows from Lemma \ref{le:23} that the assumptions $a_{1111}\geq0$ and $a_{2222}\geq0$ together with the inequality $6a_{1122}+\sqrt{a_{1111}a_{2222}}\geq0$ mean $p(t)\geq0$ for all $t\geq0$, and hence, $p_1(x_1,x_2)\geq0$ for all $x=(x_1,x_2)^\top\geq0$.

Similarly, we also have $$p_2(x_2,x_3)=\frac12a_{2222}x_2^4+6a_{2233}x_2^2x_3^2+\frac12a_{3333}x_3^4\geq0$$ and $$p_3(x_1,x_3)=\frac12a_{1111}x_1^4+6a_{1133}x_1^2x_3^2+\frac12a_{3333}x_3^4\geq0.$$

Let $$F_1(x_2,x_3)=4a_{1222}x_2^3+6a_{1223}x_2^2x_3 +6a_{1233}x_2x_3^2+4a_{1333}x_3^3.$$ Then the homogeneous polynomial $F_1(x_2,x_3)$ may be written as
$$F_1(x_2,x_3)=\Gamma y^3=\sum_{i,j,k=1}^{2}\gamma_{ijk}y_iy_jy_k\mbox{ for }y=(x_2,x_3)^\top,$$
where $\Gamma=(\gamma_{ijk})$ is 3rd order 2 dimensional symmetric tensor with its entries
$$\gamma_{111}=4a_{1222},\ \gamma_{112}=2a_{1223},\ \gamma_{122}=2a_{1233},\ \gamma_{222}=4a_{1333}.$$
Then by assumptions, we have \begin{align*}&4\gamma_{111}\gamma_{122}^3+4\gamma_{112}^3\gamma_{222}+\gamma_{111}^2\gamma_{222}^2-6\gamma_{111}\gamma_{112}\gamma_{122}\gamma_{222}-3\gamma_{112}^2\gamma_{122}^2 \\=& 16( 8a_{1222}a_{1233}^3+8a_{1223}^3a_{1333}+16a_{1222}^2a_{1333}^2-24a_{1222}a_{1223}a_{1233}a_{1333}\\
&\ -3a_{1223}^2a_{1233}^2)\geq0.\end{align*}
From Theorem \ref{th:31}, it follows that $\Gamma$ is copositive, i.e., $$F_1(x_2,x_3)=\Gamma y^3\geq0\mbox{ for all }y=(x_2,x_3)^\top\geq0.$$

Similarly, we must have $$F_2(x_1,x_3)=4a_{2111}x_1^3+6a_{1123}x_1^2x_3+6a_{1233}x_1x_3^2+4a_{2333}x_3^3\geq0$$ and
$$F_3(x_1,x_2)=4a_{3111}x_1^3+6a_{1123}x_1^2x_2+6a_{1223}x_1x_2^2+4a_{3222}x_2^3\geq0.$$
Thus, for all $x=(x_1,x_2,x_3)^\top\geq0,$ we have \begin{align*}\mathcal{A}x^4=&p_1(x_1,x_2)+p_2(x_2,x_3)+p_3(x_1,x_3)\\ +&x_1F_1(x_2,x_3)+x_2F_2(x_1,x_3)+x_3F_3(x_1,x_2)\geq0.\end{align*}
Namely, $\mathcal{A}$ is copositive, as required.
\end{proof}	

Now we give a simpler sufficient condition of copositive tensors using Theorem \ref{th:32} (for the tensor $\Gamma=(\gamma_{ijk})$ in the above proof).

\begin{theorem} \label{th:44} Let $\mathcal{A}$ be an 4th order $3$ dimensional symmetric tensor. Assume that
	$$\begin{aligned}
	& a_{1111}\geq 0,\ a_{2222}\geq 0, \ a_{3333}\geq 0,\\
&  a_{1112}\geq 0, \ a_{1113}\geq 0, \ a_{1222}\geq0, \ a_{2223}\geq 0, \ a_{1333}\geq 0,  \ a_{2333}\geq 0,\\
& a_{1122}\geq-\frac16\sqrt{a_{1111}a_{2222}},\ a_{1133}\geq-\frac16\sqrt{a_{1111}a_{3333}},\ a_{2233}\geq-\frac16\sqrt{a_{3333}a_{2222}},\\
&a_{1223}\geq\frac23\max\{a_{1222}-2\sqrt{a_{1222}a_{1333}},a_{2223}-2\sqrt{a_{1112}a_{2223}}\},\\
& a_{1233}\geq\frac23\max\{a_{2333}-2\sqrt{a_{1112}a_{2333}},a_{1333}-2\sqrt{a_{1222}a_{1333}}\},\\
&a_{1123}\geq\frac23\max\{a_{1113}-2\sqrt{a_{1113}a_{2223}},a_{1112}-2\sqrt{a_{1112}a_{2333}}\}.
	\end{aligned}$$
	Then $\mathcal{A}$ is  copositive.
\end{theorem}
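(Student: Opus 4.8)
The plan is to follow the proof of Theorem~\ref{th:43} line by line, replacing only the step that invokes Theorem~\ref{th:31} by one that invokes Theorem~\ref{th:32}. By Lemma~\ref{le:21} I would restrict attention to $x=(x_1,x_2,x_3)^\top\geq0$ with $x_1+x_2+x_3=1$, and reuse the same decomposition of \eqref{eq:42},
$$\mathcal{A}x^4 = p_1(x_1,x_2)+p_2(x_2,x_3)+p_3(x_1,x_3)+x_1F_1(x_2,x_3)+x_2F_2(x_1,x_3)+x_3F_3(x_1,x_2),$$
with $p_1=\tfrac12 a_{1111}x_1^4+6a_{1122}x_1^2x_2^2+\tfrac12 a_{2222}x_2^4$ (and cyclically for $p_2,p_3$), $F_1=4a_{1222}x_2^3+6a_{1223}x_2^2x_3+6a_{1233}x_2x_3^2+4a_{1333}x_3^3$, and the analogous $F_2,F_3$, exactly as in the proof of Theorem~\ref{th:43}.

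For the three ``diagonal-square'' blocks the argument is verbatim that of Theorem~\ref{th:43}: where $x_2\neq0$, set $t=x_1^2/x_2^2$ and $p(t)=\tfrac12a_{1111}t^2+6a_{1122}t+\tfrac12a_{2222}$; Lemma~\ref{le:23} gives $p_1\geq0$ because $a_{1111}\geq0$, $a_{2222}\geq0$ and $6a_{1122}+\sqrt{a_{1111}a_{2222}}\geq0$, the last being just a rewriting of the hypothesis $a_{1122}\geq-\tfrac16\sqrt{a_{1111}a_{2222}}$; likewise $p_2,p_3\geq0$ from the other two such hypotheses.

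The substantive step is the nonnegativity of $F_1,F_2,F_3$ on the nonnegative orthant via Theorem~\ref{th:32}. Each $F_i$ equals $\Gamma^{(i)}y^3$ for a 3rd order $2$ dimensional symmetric tensor: $\Gamma^{(1)}$ (in $y=(x_2,x_3)^\top$) has $\gamma_{111}=4a_{1222}$, $\gamma_{222}=4a_{1333}$, $\gamma_{112}=2a_{1223}$, $\gamma_{122}=2a_{1233}$; $\Gamma^{(2)}$ (in $(x_1,x_3)^\top$) has diagonal entries $4a_{1112}$, $4a_{2333}$ and off-diagonals $2a_{1123}$, $2a_{1233}$; $\Gamma^{(3)}$ (in $(x_1,x_2)^\top$) has diagonal entries $4a_{1113}$, $4a_{2223}$ and off-diagonals $2a_{1123}$, $2a_{1223}$. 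Substituting these entries into the hypotheses of Theorem~\ref{th:32} and using $\sqrt{16uv}=4\sqrt{uv}$, the two inequalities required for $\Gamma^{(1)}$ simplify to $a_{1223}\geq\tfrac23(a_{1222}-2\sqrt{a_{1222}a_{1333}})$ and $a_{1233}\geq\tfrac23(a_{1333}-2\sqrt{a_{1222}a_{1333}})$; for $\Gamma^{(2)}$ to $a_{1123}\geq\tfrac23(a_{1112}-2\sqrt{a_{1112}a_{2333}})$ and $a_{1233}\geq\tfrac23(a_{2333}-2\sqrt{a_{1112}a_{2333}})$; and for $\Gamma^{(3)}$ to $a_{1123}\geq\tfrac23(a_{1113}-2\sqrt{a_{1113}a_{2223}})$ and $a_{1223}\geq\tfrac23(a_{2223}-2\sqrt{a_{1113}a_{2223}})$. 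Since $a_{1223}$, $a_{1233}$ and $a_{1123}$ each occur in two of the $\Gamma^{(i)}$, these six requirements are exactly what the three $\geq\tfrac23\max\{\cdot,\cdot\}$ hypotheses of the theorem encode; combined with $a_{1222},a_{1333},a_{1112},a_{2333},a_{1113},a_{2223}\geq0$, Theorem~\ref{th:32} applies and yields $F_1,F_2,F_3\geq0$ for nonnegative arguments. Adding the six nonnegative blocks gives $\mathcal{A}x^4\geq0$, which is the assertion.

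I do not expect a genuine obstacle: the proof is structurally identical to that of Theorem~\ref{th:43}, and everything reduces to Lemmas~\ref{le:21}, \ref{le:23} and Theorem~\ref{th:32}. The only point needing care is the bookkeeping --- checking that the rescaled entries $\gamma^{(i)}_{\bullet}$ are the stated multiples of the $a_{\bullet\bullet\bullet\bullet}$, that the factor $\tfrac23$ and the $\max$ emerge after clearing $\sqrt{16}=4$, and that the conditions $\max\{a_{1222},a_{1333}\}>0$, $\max\{a_{1112},a_{2333}\}>0$, $\max\{a_{1113},a_{2223}\}>0$ present in Theorem~\ref{th:43} may be dropped here; the latter is legitimate because Theorem~\ref{th:32}, unlike Theorem~\ref{th:31}(2), requires no strict positivity of a diagonal entry.
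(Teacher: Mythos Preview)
Your proposal is correct and follows exactly the route the paper takes: the sentence introducing Theorem~\ref{th:44} says explicitly that the result is obtained by reusing the decomposition and argument from the proof of Theorem~\ref{th:43} and applying Theorem~\ref{th:32} (in place of Theorem~\ref{th:31}) to the three auxiliary tensors $\Gamma^{(i)}$, which is precisely what you do. Your bookkeeping is right; the one spot where your derived bound on $a_{1223}$ (involving $\sqrt{a_{1113}a_{2223}}$) does not literally match the printed hypothesis (which has $\sqrt{a_{1112}a_{2223}}$) reflects a typographical slip in the paper's statement rather than a gap in your argument.
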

	\begin{theorem} \label{th:45} Let $\mathcal{A}$ be an 4th order $3$ dimensional symmetric tensor. Assume that
	$$\begin{aligned}
	& a_{1111}>0,\ a_{2222}> 0, \ a_{3333}> 0,\ a_{1113}\geq 0, \ a_{1222}\geq0,  \ a_{2333}\geq 0,\\
&9a_{1122}+\sqrt{a_{1111}a_{2222}}\geq 0, \ 9a_{1133}+\sqrt{a_{1111}a_{3333}}\geq 0, \ 9a_{2233}+\sqrt{a_{3333}a_{2222}}\geq 0,\\
&27a_{1123}+\sqrt{\left(9a_{1122}+\sqrt{a_{1111}a_{2222}}\right)\left(9a_{1133}+\sqrt{a_{1111}a_{3333}}\right)}\geq0,\\
&27a_{1223}+\sqrt{\left(9a_{1122}+\sqrt{a_{1111}a_{2222}}\right)\left(9a_{2233}+\sqrt{a_{3333}a_{2222}}\right)}\geq0,\\
&27a_{1233}+\sqrt{\left(9a_{1133}+\sqrt{a_{1111}a_{3333}}\right)\left(9a_{2233}+\sqrt{a_{3333}a_{2222}}\right)}\geq0,\\
&2a_{1111}\left(9a_{1122}+\sqrt{a_{1111}a_{2222}}\right)^3+3^{7}\cdot4^{3}a_{1222}a_{1112}^3+3^{8}a_{1111}^2a_{1222}^2\\
&-3^{6}\cdot4a_{1111}a_{1112}a_{1222}\left(9a_{1122}+\sqrt{a_{1111}a_{2222}}\right)-3^{3}\cdot4a_{1112}^2\left(9a_{1122}+\sqrt{a_{1111}a_{2222}}\right)^2\geq0\\
&2a_{2222}\left(9a_{2233}+\sqrt{a_{3333}a_{2222}}\right)^3+3^{7}\cdot4^{3}a_{2223}a_{2333}^3+3^{8}a_{2222}^2a_{2333}^2\\
&-3^{6}\cdot4a_{2222}a_{2223}a_{2333}\left(9a_{2233}+\sqrt{a_{3333}a_{2222}}\right)-3^{3}\cdot4a_{2223}^2\left(9a_{2233}+\sqrt{a_{3333}a_{2222}}\right)^2\geq0\\
&2a_{3333}\left(9a_{1133}+\sqrt{a_{222}a_{3333}}\right)^3+3^{7}\cdot4^{3}a_{1333}a_{1113}^3+3^{8}a_{3333}^2a_{1333}^2\\
&-3^{6}\cdot4a_{3333}a_{1113}a_{1333}\left(9a_{1133}+\sqrt{a_{1111}a_{3333}}\right)-3^{3}\cdot4a_{1333}^2\left(9a_{1133}+\sqrt{a_{1111}a_{3333}}\right)^2\geq0.
	\end{aligned}$$
	Then $\mathcal{A}$ is  copositive.
\end{theorem}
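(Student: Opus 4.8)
The plan is to adapt the decomposition method of Theorems~\ref{th:43}--\ref{th:44}, using a finer ``edge--vertex'' splitting of $\mathcal{A}x^4$ over the edges and vertices of $\{1,2,3\}$. By Lemma~\ref{le:21} it suffices to prove $\mathcal{A}x^4\ge 0$ for all $x\ge 0$. Reading indices cyclically mod $3$, I would first isolate the three biquadratic ``edge blocks''
\[
B_i:=\tfrac13a_{iiii}x_i^4+6a_{i,i,i+1,i+1}x_i^2x_{i+1}^2+\tfrac13a_{i+1,i+1,i+1,i+1}x_{i+1}^4 ,
\]
so that each fourth power $a_{iiii}x_i^4$ is split in thirds (one to $B_i$, one to $B_{i-1}$, one held aside), and bound each block below by AM--GM on its two pure powers (here $a_{iiii}\ge0$ is used):
\[
B_i\ge\Big(6a_{i,i,i+1,i+1}+\tfrac23\sqrt{a_{iiii}a_{i+1,i+1,i+1,i+1}}\Big)x_i^2x_{i+1}^2=\tfrac23\beta_i x_i^2x_{i+1}^2,\qquad \beta_i:=9a_{i,i,i+1,i+1}+\sqrt{a_{iiii}a_{i+1,i+1,i+1,i+1}} .
\]
By the first three hypotheses each $\beta_i\ge 0$, so $\tfrac23\beta_i x_i^2x_{i+1}^2$ is a genuine nonnegative resource. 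Combining it with the held-aside third of each fourth power and with the odd and triple monomials of $\mathcal{A}x^4$, I would regroup into
\[
\mathcal{A}x^4\ge\sum_{i=1}^{3}x_i\,\Gamma^{(i)}y_i^3+\sum_{i=1}^{3}x_i^2\,q_i(x_{i+1},x_{i-1}),\qquad y_i:=(x_i,x_{i+1})^\top ,
\]
where $\Gamma^{(i)}$ is a $3$rd order $2$ dimensional symmetric tensor and $q_i$ a binary quadratic, assembled by $\gamma^{(i)}_{111}=\tfrac13a_{iiii}$ (the held-aside third), $\gamma^{(i)}_{112}=\tfrac43a_{i,i,i,i+1}$, $\gamma^{(i)}_{222}=4a_{i,i+1,i+1,i+1}$ (from $4a_{i,i,i,i+1}x_i^3x_{i+1}$ and $4a_{i,i+1,i+1,i+1}x_ix_{i+1}^3$), $\gamma^{(i)}_{122}=\tfrac2{27}\beta_i$ (one third of the block slack), and $q_i(u,v)=\tfrac29\beta_i u^2+12a_{i,i,i+1,i-1}uv+\tfrac29\beta_{i-1}v^2$ (cross term from $12a_{i,i,i+1,i-1}x_i^2x_{i+1}x_{i-1}$, the squared terms being the remaining two thirds of the two block slacks meeting at vertex $i$).

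Then each vertex block $q_i$ is nonnegative on the first quadrant by Lemma~\ref{le:23}, whose condition $12a_{i,i,i+1,i-1}+2\sqrt{\tfrac29\beta_i\cdot\tfrac29\beta_{i-1}}\ge 0$ is exactly the middle hypothesis $27a_{i,i,i+1,i-1}+\sqrt{\beta_i\beta_{i-1}}\ge 0$; and each edge form $\Gamma^{(i)}$ has $\gamma^{(i)}_{111}=\tfrac13a_{iiii}>0$ and $\gamma^{(i)}_{222}=4a_{i,i+1,i+1,i+1}\ge 0$ (the hypotheses $a_{1222},a_{2333},a_{1113}\ge0$), so $\Gamma^{(i)}$ is copositive by Theorem~\ref{th:31}(2) provided its discriminant $4\gamma^{(i)}_{111}(\gamma^{(i)}_{122})^3+4(\gamma^{(i)}_{112})^3\gamma^{(i)}_{222}+(\gamma^{(i)}_{111})^2(\gamma^{(i)}_{222})^2-6\gamma^{(i)}_{111}\gamma^{(i)}_{112}\gamma^{(i)}_{122}\gamma^{(i)}_{222}-3(\gamma^{(i)}_{112})^2(\gamma^{(i)}_{122})^2\ge 0$, which after substituting the values above equals $\tfrac{16}{3^{10}}$ times the $i$-th large inequality in the statement. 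Summing the three nonnegative pieces yields $\mathcal{A}x^4\ge 0$, i.e.\ $\mathcal{A}$ is copositive.

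The main obstacle is the combinatorial bookkeeping: one must check that these particular fractions ($\tfrac13$ of each fourth power, $\tfrac29$ of each AM--GM slack) make the regrouping into $\sum_i x_i\Gamma^{(i)}y_i^3+\sum_i x_i^2q_i$ an identity monomial by monomial, and then verify the two closed-form identities --- that Lemma~\ref{le:23} applied to $q_i$ reproduces the stated middle inequality, and that Theorem~\ref{th:31}(2) applied to $\Gamma^{(i)}$ reproduces the stated large inequality up to the factor $16/3^{10}$ (the powers $3^7,3^8,4^3,\dots$ appearing there are precisely what records the rescalings $\gamma^{(i)}_{111}=\tfrac13a_{iiii}$, $\gamma^{(i)}_{112}=\tfrac43a_{i,i,i,i+1}$, $\gamma^{(i)}_{122}=\tfrac2{27}\beta_i$, $\gamma^{(i)}_{222}=4a_{i,i+1,i+1,i+1}$). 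Once the decomposition is fixed, every remaining step is a direct application of Lemma~\ref{le:21}, Lemma~\ref{le:23} and Theorem~\ref{th:31}.
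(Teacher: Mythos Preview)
Your proposal is correct and follows essentially the same route as the paper: the paper writes the explicit identity with three perfect squares $\tfrac13(\sqrt{a_{iiii}}x_i^2-\sqrt{a_{jjjj}}x_j^2)^2$, which is exactly your AM--GM step $B_i\ge\tfrac23\beta_i x_i^2x_{i+1}^2$ made exact, and then decomposes the remainder into the same three cubic forms $\tfrac19x_iF_i=x_i\Gamma^{(i)}y_i^3$ and three quadratic forms $\tfrac29x_i^2p_i=x_i^2q_i$ (your $\Gamma^{(i)}$ and $q_i$ agree with the paper's $F_i,p_i$ after a harmless rescaling by $9$ and $\tfrac29$). The appeals to Theorem~\ref{th:31}(2) and Lemma~\ref{le:23}, and the arithmetic check that the resulting discriminant condition matches the stated large inequality up to the positive factor $16/3^{10}$ (equivalently $16/9$ in the paper's normalization), are identical.
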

\begin{proof}  Rewritten the homogeneous polynomial $\mathcal{A}x^4$ \eqref{eq:42} as follows,
\begin{align*}\mathcal{A}x^4=&\frac13\left(\left(\sqrt{a_{1111}}x_1^2-\sqrt{a_{2222}}x_2^2\right)^2+\left(\sqrt{a_{1111}}x_1^2-\sqrt{a_{3333}}x_3^2\right)^2
+\left(\sqrt{a_{3333}}x_3^2-\sqrt{a_{2222}}x_2^2\right)^2\right)\\
+&\frac19x_1\left(3a_{1111}x_1^3+36a_{1112}x_1^2x_2+2\left(9a_{1122}+\sqrt{a_{1111}a_{2222}}\right)x_1x_2^2+36a_{1222}x_2^3\right)\\
+&\frac19x_2\left(3a_{2222}x_2^3+36a_{2223}x_2^2x_3+2\left(9a_{2233}+\sqrt{a_{3333}a_{2222}}\right)x_2x_3^2+36a_{2333}x_3^3\right)\\
+&\frac19x_3\left(3a_{3333}x_3^3+36a_{1333}x_1x_3^2+2\left(9a_{1133}+\sqrt{a_{1111}a_{3333}}\right)x_1^2x_3+36a_{1113}x_1^3\right)\\
+& \frac29x_1^2\left(\left(9a_{1122}+\sqrt{a_{1111}a_{2222}}\right)x_2^2+54a_{1123}x_2x_3+\left(9a_{1133}+\sqrt{a_{1111}a_{3333}}\right)x_3^2\right)\\
+& \frac29x_2^2\left(\left(9a_{1122}+\sqrt{a_{1111}a_{2222}}\right)x_1^2+54a_{1223}x_1x_3+\left(9a_{2233}+\sqrt{a_{3333}a_{2222}}\right)x_3^2\right)\\
+& \frac29x_3^2\left(\left(9a_{1133}+\sqrt{a_{1111}a_{3333}}\right)x_1^2+54a_{1233}x_1x_2+\left(9a_{2233}+\sqrt{a_{3333}a_{2222}}\right)x_2^2\right).
		\end{align*}
Let $$F_1(x_1,x_2)=3a_{1111}x_1^3+36a_{1112}x_1^2x_2+2\left(9a_{1122}+\sqrt{a_{1111}a_{2222}}\right)x_1x_2^2+36a_{1222}x_2^3.$$ Then the homogeneous polynomial $F_1(x_1,x_2)$ may be written as
$$F_1(x_1,x_2)=\Gamma y^3=\sum_{i,j,k=1}^{2}\gamma_{ijk}y_iy_jy_k\mbox{ for }y=(x_1,x_2)^\top,$$
where $\Gamma=(\gamma_{ijk})$ is 3rd order 2 dimensional symmetric tensor with its entries
$$\gamma_{111}=3a_{1111},\ \gamma_{112}=12a_{1112},\ \gamma_{122}=\frac23\left(9a_{1122}+\sqrt{a_{1111}a_{2222}}\right),\ \gamma_{222}=36a_{1222}.$$
Then by assumptions, we have
\begin{align*}&4\gamma_{111}\gamma_{122}^3+4\gamma_{112}^3\gamma_{222}+\gamma_{111}^2\gamma_{222}^2-6\gamma_{111}\gamma_{112}\gamma_{122}\gamma_{222}-3\gamma_{112}^2\gamma_{122}^2 \\
=& 4(3a_{1111})\left(\frac23\left(9a_{1122}+\sqrt{a_{1111}a_{2222}}\right)\right)^3+4(12a_{1112})^3(36a_{1222})+(3a_{1111})^2(36a_{1222})^2\\
&-6(3a_{1111})(12a_{1112})(\frac23\left(9a_{1122}+\sqrt{a_{1111}a_{2222}}\right))(36a_{1222})\\
&-3(12a_{1112})^2\left(\frac23\left(9a_{1122}+\sqrt{a_{1111}a_{2222}}\right)\right)^2\\
=&\frac{16}{9}(2a_{1111}\left(9a_{1122}+\sqrt{a_{1111}a_{2222}}\right)^3+3^{7}\cdot4^{3}a_{1222}a_{1112}^3+3^{8}a_{1111}^2a_{1222}^2\\
-&3^{6}\cdot4a_{1111}a_{1112}a_{1222}\left(9a_{1122}+\sqrt{a_{1111}a_{2222}}\right)-3^{3}\cdot4a_{1112}^2\left(9a_{1122}+\sqrt{a_{1111}a_{2222}}\right)^2)\geq0.\end{align*}
From Theorem \ref{th:31}, it follows that $\Gamma$ is copositive, i.e., $$F_1(x_1,x_2)=\Gamma y^3\geq0\mbox{ for all }y=(x_1,x_2)^\top\geq0.$$

Similarly, we must have $$F_2(x_2,x_3)=3a_{2222}x_2^3+36a_{2223}x_2^2x_3+2\left(9a_{2233}+\sqrt{a_{3333}a_{2222}}\right)x_2x_3^2+36a_{2333}x_3^3\geq0$$ and
$$F_3(x_1,x_3)=3a_{3333}x_3^3+36a_{1333}x_1x_3^2+2\left(9a_{1133}+\sqrt{a_{1111}a_{3333}}\right)x_1^2x_3+36a_{1113}x_1^3\geq0.$$
Let $$p_1(x_2,x_3)=\left(9a_{1122}+\sqrt{a_{1111}a_{2222}}\right)x_2^2+54a_{1123}x_2x_3+\left(9a_{1133}+\sqrt{a_{1111}a_{3333}}\right)x_3^2.$$
It follows from (strict) copositivity of $2\times2$ matrix (or Lemma \ref{le:23}) that $p_1(x_2,x_3)\geq0$ for all $(x_2,x_3)^\top\geq0$ if and only if
		$$\begin{aligned}&9a_{1122}+\sqrt{a_{1111}a_{2222}}\geq0,\ 9a_{1133}+\sqrt{a_{1111}a_{3333}}\geq0,\\
& 27a_{1123}+\sqrt{\left(9a_{1122}+\sqrt{a_{1111}a_{2222}}\right)\left(9a_{1133}+\sqrt{a_{1111}a_{3333}}\right)}\geq0.\end{aligned}$$
Similarly, we also have $$p_2(x_1,x_3)=\left(9a_{1122}+\sqrt{a_{1111}a_{2222}}\right)x_1^2+54a_{1223}x_1x_3+\left(9a_{2233}+\sqrt{a_{3333}a_{2222}}\right)x_3^2\geq0$$
and $$p_3(x_1,x_2)=\left(9a_{1133}+\sqrt{a_{1111}a_{3333}}\right)x_1^2+54a_{1233}x_1x_2+\left(9a_{2233}+\sqrt{a_{3333}a_{2222}}\right)x_2^2\geq0.$$

Thus, for all $x=(x_1,x_2,x_3)^\top\geq0,$ we have
\begin{align*}\mathcal{A}x^4=&\frac13\left(\sqrt{a_{1111}}x_1^2-\sqrt{a_{2222}}x_2^2\right)^2+\frac13\left(\sqrt{a_{1111}}x_1^2-\sqrt{a_{3333}}x_3^2\right)^2\\&
+\frac13\left(\sqrt{a_{3333}}x_3^2-\sqrt{a_{2222}}x_2^2\right)^2\\&+\frac19(x_1F_1(x_1,x_2)+x_2F_2(x_2,x_3)+x_3F_3(x_1,x_3))\\
&+\frac29(x_1^2p_1(x_2,x_3)+2x_2^2p_2(x_1,x_3)+2x_3^2p_3(x_1,x_2))\geq0.\end{align*}
Namely, $\mathcal{A}$ is copositive, as required.
\end{proof}

\begin{remark} In this section, the conclusions are proved by reducing dimensions or orders of tensor. For example, Theorems \ref{th:45} and \ref{th:43}, an 4th order 3 dimensional tensor is decomposed into three 3rd order 2 dimensional tensors and three 2nd order 2 dimensional tensors, and then, by studying the copositivity of these lower dimensional and lower order tensors,  the desired conclusions are proved. Similarly, we may decompose an 4th order 3 dimensional tensor into three 3rd order 3 dimensional tensors $\mathbf{\Gamma}_1$, $\mathbf{\Gamma}_2$, $\mathbf{\Gamma}_3$,
\begin{align*}\mathcal{A}x^4=&x_1(a_{1111}x_1^3+2a_{1112}x_1^2x_2+3a_{1122}x_1x_2^2+2a_{1222}x_2^3+2a_{1333}x_3^3\\
&+3a_{1133}x_1x_3^2+2a_{1113}x_1^2x_3+4a_{1123}x_1x_2x_3+4a_{1223}x_2^2x_3+4a_{1233}x_2x_3^2)\\
+&x_2(2a_{1112}x_1^3+3a_{1122}x_1^2x_2+2a_{1222}x_1x_2^2+a_{2222}x_2^3+2a_{2333}x_3^3\\
&+4a_{1233}x_1x_3^2+4a_{1123}x_1^2x_3+4a_{1223}x_1x_2x_3+2a_{2223}x_2^2x_3+3a_{2233}x_2x_3^2)\\
+&x_3(2a_{1113}x_1^3+4a_{1123}x_1^2x_2+4a_{1223}x_1x_2^2+2a_{2223}x_2^3+a_{3333}x_3^3\\
&+2a_{1333}x_1x_3^2+3a_{1133}x_1^2x_3+4a_{1233}x_1x_2x_3+3a_{2233}x_2^2x_3+2a_{2333}x_2x_3^2)\\
=&x_1\mathbf{\Gamma}_1x^3+x_2\mathbf{\Gamma}_2x^3+x_3\mathbf{\Gamma}_3x^3.
\end{align*}
Then we can obtain other sufficient conditions of copositivity of an 4th order 3 dimensional tensor $\mathcal{A}$ by Theorems \ref{th:34} and \ref{th:35}.  Here we omit it.
This way to reducing dimensions or orders of tensor may be a very important method of analysing higher order tensors in future.
\end{remark}
	
\section{\bf Checking vacuum stability of $\mathbb{Z}_3$ scalar dark matter}
Kannike \cite{K2016,K2018} presented a physical example defined by scalar dark matter stable under a $\mathbb{Z}_3$ discrete group, that is,
the most general scalar quartic
potential of the {\bf SM} Higgs $\mathbf{H}_1$, an inert doublet $\mathbf{H}_2$ and a
complex singlet $\mathbf{S}$ which is symmetric under a $\mathbb{Z}_3$ group is
\begin{align}
V(h_1,h_2,s)=& \lambda_1|H_1|^4+\lambda_2|H_2|^4+\lambda_3|H_1|^2|H_2|^2+\lambda_4(H_1^{\dagger} H_2)(H_2^{\dagger} H_1)\nonumber\\
&\ +\lambda_S|S|^4+\lambda_{S1}|S|^2|H_1|^2+\lambda_{S2}|S|^2|H_2|^2\nonumber\\
&\ +\frac12(\lambda_{S12}S^2H_1^{\dagger} H_2+\lambda_{S12}^*S^{{\dagger}2}H_2^{\dagger} H_1)\nonumber\\
= & \lambda_1h_1^4+\lambda_2h_2^4+\lambda_3h_1^2h^2_2+\lambda_4\rho^2 h_1^2h_2^2\nonumber\\
&\ +\lambda_Ss^4+\lambda_{S1}s^2h_1^2+\lambda_{S2}s^2h_2^2-|\lambda_{S12}|\rho s^2h_1h_2\nonumber\\
=&\mathbf{\Gamma} z^4=\sum_{i,j,k,l=1}^{3}\gamma_{ijkl}z_kz_jz_kz_l,\nonumber
\end{align}
where $z=(z_1,z_2,z_3)^\top=(h_1,h_2,s)^\top$, the orbit space parameter $\rho\in[0,1]$, $$
h_1=|H_1|,\ h_2=|H_2|, H_2^{\dagger} H_1=h_1h_2\rho e^{i\phi}, S=se^{i\phi_S}, \lambda_{S12}=-|\lambda_{S12}|,$$
$\mathbf{\Gamma}=(\gamma_{ijkl})$ is an 4th order 3 dimensional real symmetric tensor with
\begin{align}
&\gamma_{1111}=\lambda_1,\ \gamma_{2222}=\lambda_2,\ \gamma_{3333}=\lambda_S, \nonumber\\
&\gamma_{1122}=\frac16(\lambda_3+\lambda_4\rho^2),\ \gamma_{1133}=\frac16\lambda_{S1},\ \gamma_{2233}=\frac16\lambda_{S2}, \nonumber\\
& \gamma_{1233}=-\frac1{12}|\lambda_{S12}|\rho,\ \gamma_{ijkl}=0 \mbox{ for the others}.\nonumber
\end{align}
Clearly, $z\geq0.$ It follows from Theorem \ref{th:45} that the conditions of (strict) copositivity of the tensor $\mathbf{\Gamma}$ (that is, $V(h_1,h_2,s)= \mathbf{\Gamma}z^4\geq0(>0)$) are
\begin{align}
&\lambda_1>0,\ \lambda_2>0,\ \lambda_S>0, \nonumber\\
&3\lambda_3+3\lambda_4\rho^2+2\sqrt{\lambda_1\lambda_2}\geq0(>0),\ 3\lambda_{S1}+2\sqrt{\lambda_1\lambda_S}\geq0(>0),\nonumber\\ & 3\lambda_{S2}+2\sqrt{\lambda_S\lambda_2}\geq0(>0), \nonumber\\
& -\frac94|\lambda_{S12}|\rho+\sqrt{\left(3\lambda_{S1}+2\sqrt{\lambda_1\lambda_S}\right)\left(3\lambda_{S2}+2\sqrt{\lambda_S\lambda_2}\right)}\geq0(>0).\nonumber
\end{align}
So these conditions assure the potential $V(h_1,h_2,s)$ under a $\mathbb{Z}_3$ group is bounded from below, and hence, this guarantees the vacuum stability for $\mathbb{Z}_3$ scalar dark matter. 

\bibliographystyle{amsplain}

\end{document}